\theoremstyle{plain}
  \newtheorem{theorem}[subsection]{Theorem}
  \newtheorem{proposition}[subsection]{Proposition}
  \newtheorem{lemma}[subsection]{Lemma}
\def\bphi{{\boldsymbol{\phi} }}
\def\bpsi{{\boldsymbol{\psi} }}
\def\F{{\mathcal F}}
\def\R{{\mathbb{R}}}
\def\C{{\mathbb{C}}}
\def\I{{\mathcal I}}
\def\A{{\mathcal A}}
\def\B{{\mathcal B}}
\def\P{{\mathcal P}}
\def\Q{{\mathcal Q}}
\def\H{{\mathcal H}}
\def\N{{\mathcal N}}
\def\M{{\mathcal M}}
\def\E{{\mathcal E}}
\def\V{{\mathcal V}}
\def\ch{\mbox{ch} (k)}
\def\chbb{\mbox{chb} (k)}
\def\trigh{\mbox{trigh} (k)}
\def\sh{\mbox{sh} (k)}
\def\shh{\mbox{sh}}
\def\chh{\mbox{ch}}
\def\shhb{\overline {\mbox{sh}}}
\def\th{\mbox{th} (k)}
\def\chhb{\overline {\mbox{  ch}}}
\def\shbb{\mbox{shb} (k)}
\def\chb{\overline{\mbox{\rm ch}  (k)}}
\def\shb{\overline{\mbox{sh}(k)}}
\def\z{\zeta}
\def\zb{\overline{\zeta}}
\def\FF{{\mathbb{F}}}
\theoremstyle{remark}
  \newtheorem{remark}[subsection]{Remark}
\theoremstyle{definition}
  \newtheorem{definition}[subsection]{Definition}
\begin{document}

\def\vac{{\big\vert 0\big>}}
\def\fpsi{{\big\vert\psi\big>}}
\def\bphi{{\boldsymbol{\phi} }}
\def\bpsi{{\boldsymbol{\psi} }}
\def\F{{\mathcal F}}
\def\R{{\mathbb{R}}}
\def\C{{\mathbb{C}}}
\def\I{{\mathcal I}}
\def\Q{{\mathcal Q}}
\def\ch{\mbox{\rm ch} (k)}
\def\cht{\mbox{\rm ch} (2k)}
\def\chbb{\mbox{\rm chb} (k)}
\def\trigh{\mbox{\rm trigh} (k)}
\def\p{\mbox{p} (k)}
\def\sh{\mbox{\rm sh} (k)}
\def\sht{\mbox{\rm sh} (2k)}
\def\shh{\mbox{\rm sh}}
\def\chh{\mbox{ \rm ch}}
\def\th{\mbox{\rm th} (k)}
\def\thb{\overline{\mbox{\rm th} (k)}}
\def\shhb{\overline {\mbox{\rm sh}}}
\def\chhb{\overline {\mbox{\rm ch}}}
\def\shbb{\mbox{\rm shb} (k)}
\def\chbt{\overline{\mbox{\rm ch}  (2k)}}
\def\shb{\overline{\mbox{\rm sh}(k)}}
\def\shbt{\overline{\mbox{\rm sh}(2k)}}
\def\z{\zeta}
\def\zb{\overline{\zeta}}
\def\FF{{\mathbb{F}}}
\def\S{{\bf S}}
\def\W{{\bf W}}
\def\WW{{\mathcal W}}
\def\N{{\bf N}}
\def\M{{\bf M}}
\def\L{{\mathcal L}}
\def\E{{\mathcal E}}
 \title[]%
{ Beyond mean field: on the role of pair excitations in the evolution of condensates.}

\author{M. Grillakis}
\address{University of Maryland, College Park}
\email{mng@math.umd.edu}

\author{M. Machedon}
\address{University of Maryland, College Park}
\email{mxm@math.umd.edu}

\thanks{
The authors thank
Sergiu Klainerman for the interest shown for this work, and
John Millson for many
discussions related to  the symplectic group and its representations. The authors would like to thank the Department of Applied Mathematics at the University of Crete and ACMAC for their hospitality during the
preparation of the present work.
}

\subjclass{}
\keywords{}
\date{}
\dedicatory{This paper is dedicated to Professor Choquet-Bruhat, whose work
on nonlinear wave equations was an inspiration to us earlier in our careers}
\commby{}
\maketitle
\section{Introduction}
\begin{abstract}
This paper is in part a summary of our earlier work \cite{GMM1, GMM2, GM}, and in part an announcement  introducing a refinement
of the equations for the pair excitation function used in our previous work with D. Margetis. The new equations are Euler-Lagrange equations, and the solutions conserve energy and the number of particles.
\end{abstract}
\section{Introduction}

The problem, which has received a lot of attention in recent years, is concerned with the evolution of the $N$-body linear Schr\"odinger equation
\begin{align*}
&\frac{1}{i} \frac{\partial}{\partial t} \psi_N(t, \cdot)= H_N\psi_N(t, \cdot) \mbox{ with}\\
&\psi_N (0, x_1, \cdots, x_N) =\phi_0( x_1) \phi_0( x_2) \cdots
\phi_0( x_N)\\
&  \|\psi_N(t, \cdot)\|_{ L^2(\mathbb R^{3N})}=1
\end{align*}
The Hamiltonian is an operator of the form
 \begin{align*}
H_N=\sum_{j=1}^N \Delta_{x_j}-\frac{1}{N}\sum_{i<j}v_N(x_i-x_j)
\end{align*}
where $v_{N}(x):=N^{3\beta}v(N^{\beta}x)$ with $0\leq\beta\leq 1$
models the strength of two body interactions. Notice that if $\beta >0$ then $v_{N}(x)\to\delta(x)$ as $N\to\infty$. For simplicity we assume that
$v\in C_{0}$ and $v\geq 0$.
The goal is to show, in a sense to be made precise,
\begin{align}
\psi_N (t, x_1, \cdots, x_N) \simeq e^{i N \chi(t)}\phi(t, x_1) \phi(t, x_2) \label{manyN} \cdots
\phi(t, x_N)
\end{align}
where
$\phi$ satisfies a suitable non-linear Schr\"odinger equation. In particular, this approximation is not true in
$L^2(\mathbb R^{3N})$.

The motivation for this problem is that in the presence of a trap the ground state of $H_N$
 looks like
 \begin{align*}
\Psi_N (x_1, x_2, \cdots, x_N) \backsimeq \phi_0( x_1) \phi_0( x_2) \cdots
\phi_0( x_N)
\end{align*}
This is suggested by the result of Lieb and Seiringer who showed in \cite{LS1} that
\begin{align*}
\gamma_1^N(x, x') \to \phi_0(x)\overline{\phi_0}(x')
\end{align*}
where
\begin{align*}
\gamma_1^N(x, x') = &\int \Psi_N (x, x_2, \cdots, x_N)\overline{\Psi_N} (x', x_2, \cdots, x_N)
dx_2 \cdots dx_N
\end{align*}
Here $\|\phi_0\|_{L^2}=1$ and $\phi_0$ minimizes the Gross-Pitaevskii functional. See \cite{lieb05} for  extensive
background.

The reason for the recent attention to this problem is two-fold. On the one hand experimental advances during the last twenty years made the
creation and manipulation of condensates in the laboratory possible, on the other hand recent mathematical developments made possible the
rigorous treatment of the equations when the number of particles, namely $N$, is large.

While this is a "classical PDE problem" (as opposed to a Fock space problem), the PDE approach to this problem
 has only been studied systematically  during the last 10-15 years, in the series of papers of
 Erd\"os and Yau \cite{E-Y1}, and
 Erd\"os, Schlein and Yau \cite{E-S-Y1} to \cite{E-S-Y4}. See also \cite{E-E-S-Y1}. These papers prove
 \begin{align}
\gamma_1^N(t, x, x') \to \phi(t, x)\overline{\phi}(t, x') \label{ESY}
\end{align}
in trace norm as $N \to \infty$, and similarly for the higher order marginal density matrices $\gamma_k^N$, where $k$ is fixed. The problem becomes more difficult and interesting as the parameter $\beta$ in the definition of $v_N$ approaches 1.
The strategy of these papers is based on the older work  of Spohn \cite{spohn}.
 Recent simplifications and generalizations,
 based on harmonic analysis techniques and a "boardgame argument" inspired by the Feynman diagram approach of
 Erd\"os, Schlein and Yau,
 were given in \cite{K-MMM}, \cite{KSS}, \cite{CP}, \cite{XC3},     \cite{CH}, \cite{CH1}. See also
 \cite{FKS}, \cite{K-P} for a different approach.

The symmetric Fock space approach to the problem is much older.
It originated in physics, with the papers by Lee, Huang and Yang \cite{Lee-H-Y} in the static case, and Wu \cite{wuI}
in the time-dependent case.
 See also \cite{Bo}.
 It continued with the mathematically rigorous work of Hepp \cite{hepp}, and Ginibre and Velo \cite{G-V}.

Motivated by the goal of obtaining a convergence rate to solutions of NLS in \eqref{ESY}, Rodnianski and Schlein resumed the rigorous Fock space approach in \cite{Rod-S}. This paper, as well as the older work of Wu, served as an inspiration
for our work. Our goal is to obtain a refinement to \eqref{manyN} which provides an $L^2(\mathbb R^{3N})$ and Fock space estimate.
This leads to the introduction of the pair excitation function $k$.

We also mention the recent preprint \cite{B-O-S} where a similar approach (but with an explicit choice of pair excitation function $k$) is used to prove convergence of the density matrices in the critical case $\beta=1$.

\section{Fock space}

In this section we briefly review symmetric Fock space, following the notation of \cite{GM}.  See \cite{Rod-S}, for more details. The elements of
$\FF$ are  vectors of the form
\begin{equation*}
\big\vert\psi\big>=\big(\psi_{0}\ ,\ \psi_{1}(x_{1})\ ,\ \psi_{2}(x_{1},x_{2})\ ,\ \ldots\ \big)
\end{equation*}
where $\psi_{0}\in \C$ and $\psi_k$ are symmetric $L^2$ functions. The norm of such a vector is,
\begin{equation*}
\big\Vert\ \big\vert\psi\big>\big\Vert_{\F}^{2}=\big<\psi\big\vert\psi\big> =\vert\psi_{0}\vert^{2}+
\sum_{n=1}^{\infty}\big\Vert\psi_{n}\big\Vert^{2}_{L^{2}}\ .
\end{equation*}
The creation and anihilation distribution valued operators denoted by $a^{\ast}_{x}$ and
$a_{x}$ respectively which act on vectors of the form $(0, \cdots, \psi_{n-1}, 0,  \cdots)$ and
$(0, \cdots, \psi_{n+1}, 0,  \cdots)$
by
\begin{align*}
&a^{\ast}_{x}(\psi_{n-1}):=\frac{1}{\sqrt{n}}\sum_{j=1}^{n}\delta(x-x_{j})
\psi_{n-1}(x_{1},\ldots ,x_{j-1},x_{j+1},\ldots ,x_{n})
\\
&a_{x}(\psi_{n+1}):=\sqrt{n+1}\psi_{n+1}([x],x_{1},\ldots ,x_{n})
\end{align*}
with $[x]$ indicating that the variable $x$ is frozen.
The vacuum state is defined as follows:
\begin{equation*}
\vac :=(1,0,0\ldots )
\end{equation*}
and $a_{x}\vac =0$. One can easily check that $\big[a_{x},a^{\ast}_{y}\big]=\delta(x-y)$ and since
the creation and anihilation operators are distribution valued we can form operators that act on $\FF$ by introducing
a field, say $\phi(x)$, and form
\begin{align*}
a (\bar{\phi}):=\int dx\left\{\bar{\phi}(x)a_{x}\right\}
\quad {\rm and}\quad
a^{\ast} (\phi):=\int dx\left\{\phi(x)a^{\ast}_{x}\right\}
\end{align*}
where by convention we associate $a$ with $\bar{\phi}$ and  $a^{\ast}$ with $\phi$.
These operators are well defined, unbounded, on $\FF$ provided that $\phi$ is square integrable.
The creation and anihilation operators provide a way to introduce coherent states in $\FF$ in the following manner,
first define the skew-Hermitian operator
\begin{equation}
\A(\phi):=\int dx\left\{\bar{\phi}(x)a_{x}-\phi(x)a^{\ast}_{x}\right\} \label{meanfield-2}
\end{equation}
and then introduce $N$-particle coherent states as
\begin{equation}\label{cohstates}
\big\vert\psi(\phi)\big>:=e^{-\sqrt{N}\A(\phi)}\vac \ .
\end{equation}
This is the Weyl operator used by Rodnianski and Schlein in \cite{Rod-S}.
It is easy to check that
\begin{equation*}
e^{-\sqrt{N}\A(\phi)}\vac =\left(\ldots\ c_{n}\prod_{j=1}^{n}\phi(x_{j})\ \ldots\right)
\quad {\rm with}\quad  c_{n}=\big(e^{-N}N^{n}/n!\big)^{1/2}\ .
\end{equation*}
In particular, by Stirling's formula, the main term that we are interested in has the coefficient
\begin{align}
c_{N}\approx (2\pi N)^{-1/4} \label{stirling}
\end{align}
Thus a coherent state introduces a tensor product in each sector of $\FF$.

For the  construction analogous to \eqref{meanfield-2} involving quadratics, start with the Lie algebra  of real or complex
 symplectic "matrices" of the  form
\begin{equation}
L:=\left(\begin{matrix} \notag
d(x,y)&l(x,y)\\
k(x,y)&-d^T(x,y)
\end{matrix}\right)
\end{equation}
where $d$, $k$ and $l$ are kernels in $L^{2}$, and $k$ and $l$ are symmetric in $(x,y)$. We denote this Lie algebra
$sp(\mathbb C)$ or $sp(\mathbb R)$ depending on whether the kernels $d$, $k$ and $l$ are complex or real.
The natural setting for us (which will insure that the Fock space operator $e^{\I(L)}$ defined below, is unitary, see also the appendix of \cite{GMM2}) is the subalgebra  $ sp_c(\mathbb R)=\WW sp(\mathbb R) \WW^{-1}$ where
\begin{align*}
\WW=
\frac{1}{\sqrt 2}
\left(
\begin{matrix}
I & i I\\
I & -iI
\end{matrix}
\right)
\end{align*}
The elements of $ sp_c(\mathbb R)$ look like
\begin{equation}
L:=\left(\begin{matrix}
i d(x,y)& \overline{k}(x,y)\\ \label{spc}
k(x,y)&-i d^T(x,y)
\end{matrix}\right)
\end{equation}
with $L^2$ kernels $d$ complex and self-adjoint,  $k$ complex and symmetric.
\begin{remark} \label{E}
The corresponding group elements $E \in Sp_c(\mathbb R) $ (in particular $E = e^{L}, \, L \in  sp_c(\mathbb R) $)  satisfy  the following three properties:
\begin{itemize}
\item
 $E$ commutes with the real structure $\sigma$ defined by $\sigma (
\phi,
\psi)=
(\overline{\psi},
\overline{\phi})$, in other words $E$ is of the form
\begin{align*}
E=\left(\begin{matrix}
P(x,y)&Q(x,y)\\
\overline{Q}(x,y)&\overline{P}(x, y)
\end{matrix}\right)
\end{align*}
\item
$E$  belongs to the infinite dimensional analogue of $U(n, n)$ , in other words
\begin{align*}
E^*
\left(
\begin{matrix}
I & 0\\
0 & -I
\end{matrix}
\right)
E =
\left(
\begin{matrix}
I & 0\\
0 & -I
\end{matrix}
\right)
\end{align*}
\item $E$ is in the symplectic group, meaning
\begin{align*}
E^T
\left(
\begin{matrix}
0 & I\\
-I & 0
\end{matrix}
\right) E=
\left(
\begin{matrix}
0 & I\\
-I & 0
\end{matrix}
\right)
\end{align*}
\end{itemize}
In fact, any two of the above imply the third. The conceptual reason for this is that
 the symplectic inner product $\bigg((\phi_1, \psi_1), (\phi_2, \psi_2) \bigg)=\int \phi_1 \psi_2 - \int  \psi_1 \phi_2$
 and the "$U(n, n)$" inner product  $\bigg<(\phi_1, \psi_1), (\phi_2, \psi_2) \bigg>=\int \phi_1 \overline{ \phi_2} - \int
 \psi_1 \overline{ \psi_2}
 $ are related by  $\bigg<(\phi_1, \psi_1), (\phi_2, \psi_2) \bigg>=\bigg((\phi_1, \psi_1), \sigma (\phi_2, \psi_2) \bigg)$.
 See Folland's book \cite{F} for more along these lines in the finite dimensional case.These matrices are called Bogoliubov rotations in
\cite{B-O-S}.
\end{remark}

Our approach is based on the map from $L \in sp(\mathbb C)$ to
quadratic polynomials in $(a,a^{\ast})$ in the following manner,
\begin{align}\label{liemap}
&\I\big(L\big) =
\frac{1}{2}\int dxdy\left\{
(a_{x}\ ,\ a^{\ast}_{x})\left(\begin{matrix}d(x,y)&l(x,y)\\
k(x,y)&-d(y,x)\end{matrix}\right)\left(\begin{matrix}-a^{\ast}_{y}\\ a_{y}\end{matrix}\right)\right\}\\
&=-\frac{1}{2}\int dxdy\left\{
d(x,y) a_{x}a^{\ast}_{y}+d(y,x)a^*_x a_y+k(x,y)a_x^*a_y^*-l(x,y)a_xa_y\right\} \notag\ .
\end{align}
This is the infinite dimensional Segal-Shale-Weil infinitesimal representation. The group representation was studied in \cite{shale}.
The crucial property of this map is the Lie algebra isomorphism
\begin{equation}\label{Lieisomorph}
\big[\I(L_{1}),\I(L_{2})\big]=\I\big([L_{1},L_{2}]\big)
\end{equation}
Notice that if $L \in sp_c(\mathbb R)$, then $L$ has the form \eqref{spc} and $\I(L)$ is skew-Hermitian, thus
  $e^{\I(L)} $ is a unitary operator on Fock space.
For the applications that follow we will  only use the self-adjoint elements of $sp_c(\mathbb R)$
\begin{align}
K=\left(\begin{matrix}
0 & \overline{k}(t, x, y)\\ \label{K}
k(t, x, y)& 0
\end{matrix}
\right)
\end{align}
and the corresponding
\begin{align}
&\B(k):=\I (K)=
\frac{1}{2}\int dxdy\left\{\bar{k}(t,x,y)a_{x}a_{y}
-k(t,x,y)a^{\ast}_{x}a^{\ast}_{y}\right\}\ . \label{pairexcit-2}
\end{align}
\begin{align}
K=\left(\begin{matrix}
0 & \overline{k}(t, x, y)\\ \label{K}
k(t, x, y)& 0
\end{matrix}
\right)
\end{align}
We easily compute
\begin{align*}
e^K= \left(
\begin{matrix}
\ch & \shb\\
\sh & \chb
\end{matrix}
\right)
\end{align*}
where
\begin{subequations}
\begin{align}
\sh &:=k + \frac{1}{3!} k \circ \overline k \circ k + \ldots~, \label{hypebolicsine}\\
\ch &:=\delta(x-y) + \frac{1}{2!}\overline k  \circ k + \ldots~,\label{hypebolicosine}
\end{align}
\end{subequations}
This particular construction and the corresponding unitary operator $e^{\B}$ were introduced in \cite{GMM1}.

The Fock Hamiltonian  is
\begin{subequations}
\begin{align}
&\H:=\H_{1}-N^{-1}\V\quad\quad\quad   {\rm where,} \label{FockHamilt1-a}\\
&\H_{1}:=\int dxdy\left\{
\Delta_{x}\delta(x-y)a^{\ast}_{x}a_{y}\right\}\quad {\rm and} \label{FockHamilt1-b}\\
&\V:=\frac{1}{2}\int dxdy\left\{v_{N}(x-y)a^{\ast}_{x}a^{\ast}_{y}a_{x}a_{x}\right\}.
\label{FockHamilt1-c}
\end{align}
\end{subequations}
It is a diagonal operator on Fock space, and it acts as a regular PDE Hamiltonian in $n$ variable
\begin{align*}
H_{n, \, \, PDE}=\sum_{j=1}^{n}\Delta_{x_{j}} - \frac{1}{2N}
\sum_{x_{j}\not= x_{k}}N^{3\beta}v\big(N^{\beta}(x_{j}-x_{k})\big)
\end{align*}
 on the $n$th component of $\FF$.

\section{Outline of older results}

Our goal is to study the evolution of coherent initial conditions of the form
\begin{align}
 \big\vert\psi_{exact}\big>=e^{i t \H}e^{-\sqrt{N}\A(\phi_{0})}\big\vert 0\big> \label{exact}
 \end{align}
 The papers \cite{GMM1, GMM2, GM} propose an approximation of the form
\begin{equation}\label{approx}
\big\vert\psi_{appr}\big>:=e^{-\sqrt{N}\A(\phi(t))}e^{-\B(k(t))}\big\vert 0\big>,
\end{equation}
and derive Schr\"odinger type equations  for $\phi(t, x)$, $k(t, x, y)$
so that
 $\big\vert\psi_{exact}(t)\big>\approx e^{iN\chi(t)}\big\vert\psi_{appr}(t)\big>$,
 with $\chi(t)$ a real phase factor, and find
  precise estimates in Fock space, see Theorem \eqref{GMthm} below.
  Our strategy is to consider
  \begin{align*}
  \big\vert\psi_{red}\big>=e^{\B(t)}e^{\sqrt{N}\A(t)}e^{it\H}e^{-\sqrt{N}\A(0)}\big\vert 0\big>
  \end{align*}
and then  find a "reduced Hamiltonian" $H_{red}$ so that
\begin{equation}
\frac{1}{i}\partial_{t}\big\vert\psi_{red}\big>=
\H_{red}\big\vert\psi_{red}\big> \ .
\end{equation}
The reduced Hamiltonian is
\begin{align*}
\H_{red} &:=\frac{1}{i}\big(\partial_{t}e^{\B}\big)e^{-\B}
\\
&+e^{\B}\left(\frac{1}{i}\big(\partial_{t}e^{\sqrt{N}\A}\big)e^{-\sqrt{N}\A}
+e^{\sqrt{N}\A}\H e^{-\sqrt{N}\A}\right)e^{-\B}\ .
\end{align*}
It can be written abstractly as a  composition (in space only) of operators
\begin{align*}
\H_{red}=\frac{1}{i}\frac{\partial}{\partial t} + e^{\B}e^{\sqrt N \A}\left(-\frac{1}{i}\frac{\partial}{\partial t} + \H\right)\circ e^{-\sqrt N \A}e^{-\B}
\end{align*}
Explicitly it is
\begin{align}
&\H_{red}
= N \P_0 + N^{1/2}e^{\B} \P_1 e^{-\B} \notag
\\
&+\H_G +\I(R)-N^{-1/2}e^{\B} \P_3e^{-\B}-N^{-1}e^{\B}\P_4 e^{-B}\label{hred}
\end{align}
where the various terms are defined below.
$\P_{n}$ indicate polynomials of degree $n$ in $a, a^*$ to be given explicitly:
\begin{align}
\P_{0}&:=\int dx\left\{\frac{1}{2i}\big(\phi\bar{\phi}_{t}-\bar{\phi}\phi_{t}\big)-\big\vert\nabla\phi\big\vert^{2}
\right\}\nonumber
\\
&-\frac{1}{2}\int dxdy\left\{v_{N}(x-y)\vert\phi(x)\vert^{2}\vert\phi(y)\vert^{2}\right\}\ . \label{phase1-I}\\
\P_{1}&:=\int dx\left\{ h(t,x)a^{\ast}_{x}+\bar{h}(t,x)a_{x}\right\}\\
=&a^*(h(t, \cdot) + a(\overline h(t, \cdot)) \notag
\end{align}
where $h:=-(1/i)\partial_{t}\phi +\Delta\phi -\big(v_{N}\ast\vert\phi\vert^{2}\big)\phi$.
\begin{subequations}
\begin{align}
&\H_{G}:=\frac{1}{2}\int dxdy\left\{-g_{N}(t,x,y)a_y^*a_x-g_{N}(t,y,x)a_x^*a_y\right\}\\
&\mbox{where}\\
&g_{N}(t, x, y):= -\Delta_x \delta (x-y)
 +(v_N * |\phi|^2 )(t, x) \delta(x-y) \nonumber \\
&\qquad\qquad\  +v_N(x-y) \overline\phi (t, x)  \phi(t, y)\label{op-g}
\end{align}
\end{subequations}
and
\begin{align}
&R=\frac{1}{i}\left(\frac{\partial}{\partial t}e^{K}\right)e^{-K} + [G, e^{K}]e^{-K}+ e^{K} M e^{-K}= \notag \\
&= \left(
\begin{matrix} \label{rformula}
-\overline{\W(\chb)} & -\overline{\S(\sh)}\\
\S(\sh) & \W(\chb)
\end{matrix}
\right)\circ\left(
\begin{matrix}
\ch &-\shb\\
-\sh & \chb
\end{matrix}
\right)\\
&+e^{K} M e^{-K} \notag
\end{align}
where
 ${\bf S}$ describes a Schr\"odinger type evolution, while ${\bf W}$ is a Wigner type operator by
\begin{align*}
&{\bf S}(s):= \frac{1}{i} s_t  + g^T \circ s + s \circ g\quad {\rm and}\quad
{\bf W}(p):= \frac{1}{i} p_t  +[g^T, p]\\
& {\rm while}\quad M :=
\left(
\begin{matrix}
0 &\overline{m}\\
-m & 0
\end{matrix}
\right) \quad {\rm where}\quad\\
&m(x, y) :=- v_N(x-y) \phi(x) \phi(y), \, \,
v_N(x)= N^{3 \beta} v(N^{\beta}x)\\
& {\rm and}\quad G :=
\left(
\begin{matrix}
g & 0\\
0 & -g^T
\end{matrix}
\right).
\end{align*}
Finally,
\begin{subequations}
\begin{align}
\P_{3}&: =[\A, \V]=\int dxdy\left\{v_{N}(x-y)\big(\phi(y)a^{\ast}_{x}a^{\ast}_{y}a_{x}+\bar{\phi}(y)a^{\ast}_{x}a_{x}a_{y}\big)\right\}
\label{cubic-1}
\\
\P_{4}&:= \V= (1/2)\int dxdy\left\{v_{N}(x-y)a^{\ast}_{x}a^{\ast}_{y}a_{x}a_{y}\right\}\ . \label{quartic-1}
\end{align}
\end{subequations}
The main result of \cite{GM}, building on the previous papers of the authors and D. Margetis \cite{GMM1, GMM2},  can be summarized as follows.
\begin{theorem}\label{GMthm} Let $\phi$ and $k$ satisfy
\begin{subequations}
\begin{align}
&\frac{1}{i}\partial_{t}\phi -\Delta\phi +\big(v_{N}\ast\vert\phi\vert^{2}\big)\phi =0 \label{meanfield-3}
\\
&\mbox{and either one of the following equivalent equations:}\\
& 1) \, \left({\bf S}\left(\sh\right) - \chb \circ m \right)\circ \ch = \left({\bf W}\left(\chb\right) + \sh \circ \overline m \right) \circ \sh \\
&\mbox{ or else the equivalent non-liner equation} \\
& 2) \, \S (\th) = m + \th \circ \overline m \circ \th \\
&\mbox{where } \, \th:= \chb^{-1} \circ \sh\\
&\mbox{or else the equivalent system of liner equations}\\
& 3a) \, {\bf S}\left(\sht\right)=m_{N} \circ \cht + \chbt \circ m_{N}  \label{pair-S-1} \\
& 3b) \, {\bf W}\left(\chbt\right)= m_{N} \circ \shbt- \sht \circ \overline m_{N}\ . \label{pair-W-1}
\end{align}
\end{subequations}
with prescribed initial conditions $\phi(0, \cdot)= \phi_0$, $k(0, \cdot, \cdot)=0$. If $\phi$, $k$ satisfy the above equations, then there exists a real phase function $\chi$ such that
\begin{equation}\label{mainestim}
\big\Vert\big\vert\psi_{exact}(t)\big>-
e^{iN\chi(t)}\big\vert\psi_{appr}(t)\big>\big\Vert_{\F}\leq \frac{C (1+t) \log^4(1+t)}{N^{(1-3\beta)/2}}\ .
\end{equation}
provided $0 < \beta < \frac{1}{3}$.
\end{theorem}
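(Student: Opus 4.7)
The plan is to transfer the estimate \eqref{mainestim} to a statement about the reduced wavefunction $|\psi_{red}(t)\rangle := e^{\B(k(t))}\,e^{\sqrt N \A(\phi(t))}\,|\psi_{exact}(t)\rangle$. Because the two exponentials are unitary, \eqref{mainestim} is equivalent to the assertion that $\bigl\||\psi_{red}(t)\rangle - e^{iN\chi(t)}|0\rangle\bigr\|_\F$ is of order $(1+t)\log^4(1+t)\,N^{-(1-3\beta)/2}$. The reduced state satisfies $(1/i)\partial_t|\psi_{red}\rangle = \H_{red}|\psi_{red}\rangle$ with $\H_{red}$ given by \eqref{hred}, so the task is to choose $\chi$, $\phi$ and $k$ in such a way that every non-negligible piece of $\H_{red}$ either cancels outright or annihilates $|0\rangle$, leaving only the cubic and quartic remainders with explicit gains in $N$.

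The cancellations proceed term by term in \eqref{hred}. Setting $\chi(t) := \int_0^t \P_0(s)\,ds$ and $|\tilde\psi\rangle := e^{-iN\chi}|\psi_{red}\rangle$ removes the scalar $N\P_0$. The linear term $N^{1/2}e^{\B}\P_1 e^{-\B}$ has $\P_1 = a^*(h) + a(\overline h)$ with $h$ equal to the residual of the Hartree equation \eqref{meanfield-3}, so imposing \eqref{meanfield-3} makes $\P_1 \equiv 0$. The quadratic operator $\H_G$ is normal-ordered and hence annihilates $|0\rangle$. Most importantly, the three equivalent formulations of the pair excitation equation are designed precisely so that the kernel $R$ in \eqref{rformula} vanishes identically, hence $\I(R) = 0$; checking that 1), 2) and the doubled linear system 3a)-3b) define the same $k$ is an algebraic manipulation based on the block factorization of $e^K$ displayed right after \eqref{K}, the substitution $\th = \chb^{-1}\circ\sh$, and the identity $e^{2K} = (e^K)^2$.

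After these cancellations, Duhamel's formula reduces \eqref{mainestim} to controlling
$$\int_0^t \bigl\|\bigl(N^{-1/2}e^{\B}\P_3 e^{-\B} + N^{-1}e^{\B}\P_4 e^{-\B}\bigr)|\tilde\psi(s)\rangle\bigr\|_\F\,ds.$$
This is the hard part. Conjugation by $e^{\B}$ is a Bogoliubov rotation, so $e^{\B}\P_3 e^{-\B}$ and $e^{\B}\P_4 e^{-\B}$ expand as polynomials of degree three and four in $a, a^*$ whose kernels are contractions of $v_N$, $\phi$, and the hyperbolic functions $\ch$, $\sh$. The dangerous factor is the $N^{3\beta}$ in $v_N$: the explicit prefactors $N^{-1/2}$ and $N^{-1}$ are only just sufficient to beat it, which is the origin of both the stated rate $N^{-(1-3\beta)/2}$ and the restriction $\beta < 1/3$. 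The estimate requires $L^2$ bounds on $\sh$, $\ch - \delta(x-y)$ and their time derivatives, $L^\infty$-in-time bounds on $\phi$ coming from \eqref{meanfield-3}, a propagated moment estimate of the form $\langle\tilde\psi|(\N+1)^m|\tilde\psi\rangle$ growing at worst polynomially in $\log(1+t)$, and the standard inequality $\|a(f)\psi\|_\F \leq \|f\|_{L^2}\|\N^{1/2}\psi\|_\F$ together with its higher-order analogues. These ingredients are the technical heart of the earlier papers \cite{GMM1, GMM2, GM}, and assembling them via Cauchy-Schwarz yields the stated bound.
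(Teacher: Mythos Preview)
The paper does not prove Theorem~\ref{GMthm}; it is quoted as the main result of \cite{GM} (building on \cite{GMM1,GMM2}), and the present paper immediately moves on to the new coupled equations. So there is no in-paper proof to compare against, only the setup of $\H_{red}$ in \eqref{hred}--\eqref{rformula}. Your outline is the right roadmap for how the argument in \cite{GM} runs, but it contains one concrete error that matters.

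You assert that the pair excitation equations force the matrix $R$ of \eqref{rformula} to vanish identically, hence $\I(R)=0$. This is false. Compute the $(2,1)$ block of $R$: it equals $\bigl(\S(\sh)-\chb\circ m\bigr)\circ\ch-\bigl(\W(\chb)+\sh\circ\overline m\bigr)\circ\sh$, and equation~1) of the theorem is exactly the statement that this block (and, by the $sp_c(\mathbb R)$ structure \eqref{spc}, the $(1,2)$ block) vanishes. The diagonal blocks of $R$ are \emph{not} set to zero by any of the equivalent equations 1), 2), 3a)--3b); they survive as a skew-Hermitian kernel of the form $i d$ with $d=d^*$. In Fock-space language, the equations kill the $a^*a^*$ and $aa$ pieces of $\I(R)$ but leave a nontrivial self-adjoint operator of $a^*a$ type, plus a trace. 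The trace is real and has to be absorbed into $\chi$ together with $N\P_0$; the surviving $a^*a$ piece does not vanish but must be added to $\H_G$ to form the generator of the ``free'' unitary evolution in Duhamel's formula. You cannot simply discard it: $\H_G$ already contains $-\Delta$ and is unbounded, and the correct Duhamel integrand carries a unitary conjugation by this combined quadratic flow. That conjugation in turn has to be shown to preserve the domain and the powers of the number operator used in your moment estimates. None of this is fatal to the strategy you describe, but the sentence ``$R$ vanishes identically'' misidentifies what the pair equations actually accomplish and leaves the Duhamel step incorrectly set up.
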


The purpose of the present paper is to introduce and study a coupled refinement of the system \eqref{meanfield-3}, \eqref{pair-S-1},
\eqref{pair-W-1} which, we believe,  is the correct system describing the case $\beta=1$. These equations occur as Euler-Lagrange equations, and are written down explicitly in Theorem \eqref{mainthm}.

\section{Main new results}

Since $\H_{red}$ is a fourth order polynomial in $a$ and $a^*$,
\begin{align}
\H_{red} \vac =(X_0, X_1, X_2, X_3, X_4, 0, \cdots). \label{Xidef}
\end{align}
\begin{definition}
Define the Lagrangian
\begin{align}
\mathcal{L}= -\int X_0(t) dt \label{lagrangian}
\end{align}
\end{definition}
 The new, coupled equations for $\phi$ and $k$ that we introduce in this paper are  $X_1=0$ and $X_2=0$.

 We first prove that $\L$ is indeed the Lagrangian for these equations. We start by showing "abstractly" that
 \begin{align}
 &\frac{\delta X_0}{\delta \overline \phi}= \sqrt{N} \int \left( X_1(t,  x) \ch(t,  x, \cdot) - \overline X_1(t,  x) \sh(t, x, y)
 \right) dx \label{phivar1}\\
 &\frac{\delta X_0}{\delta \overline \zeta}=
 \frac{1}{\sqrt 2 } \chb\circ X_2 \circ \ch \label{zetavar1}
 \end{align}
where $\z=\th=\chb^{-1}\circ \sh$.
We then compute explicitly the zeroth order term  $X_0(t)$ in $\H_{red} \vac $ (which provides the Lagrangian density for our coupled equations):
\begin{align*}
& - X_0(t)
=
N \int
  dx_1\left\{-\Im \left(\phi_1 \overline{\partial_t \phi_1} \right)+\big\vert\nabla\phi_1\big\vert^{2}
\right\}
\\
&+\frac{N}{2}\int  dx_1dx_2 v^{N}_{1-2}\vert\phi_1 \phi_2 + \frac{1}{N} (\shh \circ \chh)_{1, 2}\vert^{2}\\
&+\frac{1}{2}\int  dx_1dx_2 dx_3v^{N}_{1-2}\vert \phi_1 \shh_{2, 3} + \phi_2 \shh_{1, 3} \vert^2\\
+&\frac{1}{2} \Bigg(\int  dx_1 dx_2\left\{- \Im\left(\shh_{1, 2} \overline{\partial_t \shh_{1, 2}} \right) + \big \vert \nabla_{1, 2} \shh_{1, 2}\big \vert^2 \right \}\\
+&\frac{1}{2N} \int dt dx_1 dx_2
 v^N_{1-2}\Big\{
|(\shh \circ \overline \shh)_{1, 2}|^2 + (\shh \circ \overline \shh)_{1, 1}(\overline \shh \circ  \shh)_{2, 2}
\Big\}
\Bigg) \ .
\end{align*}
where $\shh_{1, 2}$ is an abbreviation for $\sh(t, x_1, x_2)$, $v^{N}_{1-2}=v_N(x_1-x_2)$, etc, and the products are pointwise products, while compositions
are denoted by $\circ$.
Then we proceed to compute explicitly the coupled equations $X_1=0$ and $X_2=0$, derive conserved quantities, and formulate a conjecture. The resulting equations are similar to those of Theorem \eqref{GMthm}, except that
$m=-v_N(x_1-x_2) \phi(t, x_1)\phi(t, x_2)$ is replaced by
\begin{align*}
\Theta=-v_N(x_1-x_2)\left(\phi(t, x_1)\phi(t, x_2) + \frac{1}{2N}  \sht(t, x_1, x_2)\right),
\end{align*}
and similar $O (\frac{1}{N})$ coupling corrections apply to the Hartree operator as well as $\S$ and $\W$.

\begin{remark} The static terms of $X_0(t)$ (not involving time derivatives) also appear in the recent preprint
\cite{B-O-S}, but do not serve as a Lagrangian there.
\end{remark}

\section{The Lagrangian and the equations, abstract formulation}

\begin{proposition} \label{phivar}
Let $k$ and  $\phi$ be fixed.
\begin{align}
 &\frac{d}{d \epsilon}\Big|_{\epsilon =0}X_0(\phi + \epsilon  h, k) \label{vark}\\
 &= 2 \sqrt{N} \Re \int X_1(t,  x) \left(\ch(t,  x, y) \overline h (t, y) -\shb(t, x, y) h(t, y)\right) dx dy \notag
 \end{align}
 In particular, if this vanishes for all $h$, then $X_1(t, x)=0$.
 \end{proposition}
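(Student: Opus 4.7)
The plan is to recast $X_0$ and $X_1$ intrinsically in Fock space, compute $\delta\psi_{\rm appr}$ exactly using that $[\A(\phi),\A(h)]$ is scalar, and identify the result via the Bogoliubov action on $\vac$. First, I would rewrite the $0$th component using the identity $\H_{\rm red}=\frac{1}{i}(\partial_tV)V^{-1}+V\H V^{-1}$ with $V:=e^{\B(k)}e^{\sqrt N\A(\phi)}$ unitary. Applying $\H_{\rm red}$ to $\vac$ and using $\langle\vac,V\,\cdot\,\rangle=\langle V^{-1}\vac,\,\cdot\,\rangle=\langle\psi_{\rm appr},\,\cdot\,\rangle$ with $\psi_{\rm appr}:=V^{-1}\vac$ yields the intrinsic formula $X_0=i\langle\psi_{\rm appr},\partial_t\psi_{\rm appr}\rangle+\langle\psi_{\rm appr},\H\psi_{\rm appr}\rangle$, while $X_1(y)=(\H_{\rm red}\vac)_1(y)$ is the one-particle Fock component.

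Next, I would compute $\delta\psi_{\rm appr}$ under $\phi\mapsto\phi+\epsilon h$. Because $[\A(\phi),\A(h)]=2i\Im\int\phi\bar h$ is a purely imaginary scalar, BCH closes and gives $\delta\psi_{\rm appr}=Z\psi_{\rm appr}$ with $Z=-\sqrt N\A(h)+i\beta$ skew-Hermitian and $\beta:=N\Im\int\phi\bar h\in\mathbb R$. Using $Z^*=-Z$, $\partial_t\delta=\delta\partial_t$, and self-adjointness of $\H$, a direct variation of $X_0$ yields
\[
\delta X_0 \;=\; 2\Re\bigl\langle\delta\psi_{\rm appr},\,(-\tfrac{1}{i}\partial_t+\H)\psi_{\rm appr}\bigr\rangle \;+\; i\partial_t\langle\psi_{\rm appr},\delta\psi_{\rm appr}\rangle,
\]
whose second term equals the total time derivative $\partial_t\beta$, inert in the variational formulation of the proposition (which is tested against $h$ of compact temporal support so as to produce the Euler--Lagrange relations for $\mathcal L=-\int X_0\,dt$).

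For the main term, inverting the defining composition of $\H_{\rm red}$ produces $V^{-1}\H_{\rm red}\vac=(-\tfrac{1}{i}\partial_t+\H)\psi_{\rm appr}$, so unitarity of $V$ recasts it as $2\Re\langle\tilde Z\vac,\H_{\rm red}\vac\rangle$ with $\tilde Z:=VZV^{-1}$. A further BCH step shows the $e^{\sqrt N\A(\phi)}$-conjugation contributes only a scalar correction, and the Bogoliubov formulas $e^{\B}a_xe^{-\B}=\int(\chb(x,y)a_y+\sh(x,y)a_y^*)\,dy$ and $e^{\B}a_x^*e^{-\B}=\int(\shb(x,y)a_y+\ch(x,y)a_y^*)\,dy$ then give
\[
\tilde Z\vac \;=\; -i\beta\,\vac \;+\; a^*\!\bigl(\sqrt N(\chb h-\sh\bar h)\bigr)\vac.
\]
Pairing with $\H_{\rm red}\vac=(X_0,X_1,X_2,\ldots)$, using $X_0\in\mathbb R$ to kill the imaginary scalar contribution to $2\Re$, and invoking the symmetry identity $\ch^T=\chb$, the pairing collapses to exactly $2\sqrt N\Re\int X_1(x)[\ch(x,y)\bar h(y)-\shb(x,y)h(y)]\,dx\,dy$, the proposition's RHS. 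The ``In particular'' clause then follows from non-degeneracy: taking $h$ and $\bar h$ as independent test directions yields $\chb X_1=0$ and $\shb X_1=0$, which combined with the symplectic identity $\chb^2-\sh\shb=I$ forces $X_1=0$.

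The main obstacle I foresee is the Bogoliubov computation of $\tilde Z\vac$: the various transpose/conjugate identities ($\ch^T=\chb$, $\sh^T=\sh$, $\shb^T=\shb$) must be tracked precisely so that the one-particle output pattern $\chb h-\sh\bar h$, when paired against $X_1$, produces exactly the combination $\ch\bar h-\shb h$ on the proposition's RHS, with the correct prefactor $2\sqrt N$ and without sign errors.
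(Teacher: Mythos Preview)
Your argument is essentially the paper's: both write $X_0$ as a Fock space expectation, differentiate $e^{\A_\epsilon}$ via BCH to obtain $\sqrt{N}(a(\bar h)-a^*(h))$ plus a purely imaginary scalar, push the conjugation through $e^{\B}$ with the Bogoliubov formulas to produce $l=\sqrt{N}(\chb\circ h-\sh\circ\bar h)$, and then pair with $\H_{red}\vac$ to isolate $X_1$. The only difference is packaging --- you work with $\psi_{\rm appr}=V^{-1}\vac$ and explicitly carry a total-time-derivative term $\partial_t\beta$, whereas the paper keeps the conjugated form $e^{\B}e^{\A}\H_t e^{-\A}e^{-\B}$ and simply drops the scalar inside the commutator.
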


 \begin{proof}
$\H_{red}$ can be written  as
\begin{align*}
\H_{red}=\frac{1}{i}\frac{\partial}{\partial t} + e^{\B}e^{\sqrt N \A}\left(-\frac{1}{i}\frac{\partial}{\partial t} + \H\right)\circ e^{-\sqrt N \A}e^{-\B}
\end{align*}
in the sense of compositions  (in space only) of operators.
During this proof, denote $\H_t=-\frac{1}{i}\frac{\partial}{\partial t} + \H$.

Let $h$ be an $L^2$ function and let\\
 $\A_{\epsilon}=\sqrt N (a\left(\overline{\phi} + \epsilon \overline h) -a^*(\phi + \epsilon  h)\right)$.
Thus we have
\begin{align*}
X_0(\phi + \epsilon  h, k)=\bigg<e^{\B}e^{\A_{\epsilon}} H_t e^{-\A_{\epsilon}}e^{-\B} \vac, \vac \bigg>
\end{align*}
We compute
\begin{align*}
&\left(\frac{d}{d \epsilon}\Big|_{\epsilon =0} e^{\A_{\epsilon}}\right) e^{-\A_{0}}=
\dot{\A_0} + \frac{1}{2}[\A_0, \dot {\A_0}]\\
&=\sqrt N \left(a(\overline{ h})-a^*(h)\right) + \frac{N}{2} [a(\overline{ \phi})-a^*(\phi), a(\overline h)-a^*(h)]\\
&=\sqrt N \left(a(\overline{ h})-a^*(h)\right) + iN \Im \int \phi \overline h
\end{align*}
and
\begin{align*}
e^{\A_{0}}\left(\frac{d}{d \epsilon}\Big|_{\epsilon =0} e^{-\A_{\epsilon}}\right) =-\left(\frac{d}{d \epsilon}\Big|_{\epsilon =0} e^{\A_{\epsilon}}\right) e^{-\A_{0}}
\end{align*}
thus
\begin{align*}
&\frac{d}{d \epsilon}\Big|_{\epsilon =0} \bigg<e^{\B}e^{\A_{\epsilon}} \H_t e^{-\A_{\epsilon}}e^{-\B} \vac, \vac \bigg>\\
&=\bigg< e^{\B}\Big[\sqrt N a(\overline h)-\sqrt N a^*( h), e^{\A_0}\H_t e^{-\A_0}\Big] e^{-\B}\vac, \vac\bigg>\\
&=\bigg< \Big[e^{\B}\left(\sqrt N a(\overline h) - \sqrt N a^*( h)\right)e^{-\B}, e^{\B} e^{\A_0}\H_t e^{-\A_0}e^{-\B}\Big] \vac, \vac\bigg>\\
&=\bigg< \Big[a(\overline l)-a^*(l), e^{\B} e^{\A_0}\H_t e^{-\A_0}e^{-\B}\Big] \vac, \vac\bigg>\\
& = 2 \Re \bigg<  \H_{red} \vac, a^*(l)\vac\bigg> :=I
\end{align*}
where we denoted
\begin{align*}
e^{\B}\left(\sqrt N a(\overline h)-\sqrt N a^*( h)\right)e^{-\B}=a(\overline l)-a^*(l)
\end{align*}
Explicitly,
\begin{align*}
&e^{\B}\left( a(\overline h)- a^*( h)\right)e^{-\B}\\
&=a(\ch \circ \overline h ) + a^*(\sh \circ \overline h )\\
&-a(\shb \circ h) - a^*(\chb \circ h)
\end{align*}
so
\begin{align*}
l= \sqrt{N}\left(\chb \circ h- \sh \circ \overline h\right)
\end{align*}
Thus,
\begin{align*}
I= &2 \sqrt{N}\Re \int X_1(t, x) \left(\ch(t, x, y) \overline h ( y) -\shb(t, x, y) h( y)\right) dx dy\\
=& 2 \sqrt{N}\Re \int \left(\chb \circ X_1 - \sh \circ \overline{X_1} \right)(y)\overline h(y) dy
\end{align*}
\end{proof}
In order to state the corresponding result for $X_2$, we have to introduce a new set of coordinates for our basic matrices
\begin{align*}
e^K= \left(
\begin{matrix}
\ch & \shb\\
\sh & \chb
\end{matrix}
\right)
\end{align*}
where
\begin{align}
K=\left(\begin{matrix}
0 & \overline{k}(t, x, y)\\ \label{K2}
k(t, x, y)& 0
\end{matrix}
\right)
\end{align}
The most obvious coordinate system is, of course, provided by $k$. We recall the following proposition, proved in \cite{GMM2}.
\begin{proposition} \label{exp} The exponential map is one-to-one and onto from matrices of the form
\eqref{K2} ($k \in L^2$, symmetric) to positive definite matrices $E$ satisfying the three properties of Remark
\eqref{E} for which $\|I-E\|_{L^2}$ is finite.
\end{proposition}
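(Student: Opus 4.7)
The plan is to exhibit $\log$ and $\exp$ as mutually inverse maps between self-adjoint elements of the form \eqref{K2} and positive definite $E$ satisfying the three properties of Remark~\ref{E} together with $\|I-E\|_{L^2}<\infty$. The strategy is to translate each of the three group-level constraints (real structure, $U(n,n)$, symplectic) into a corresponding Lie-algebra constraint on $K=\log E$, and then verify the $L^2$ condition on $k$ by relating Hilbert--Schmidt norms.

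For the forward direction, suppose $k\in L^2$ is symmetric. I would first observe that $K$ as in \eqref{K2} satisfies $K^*=K$ (using $k^T=k$), so $K$ is a bounded self-adjoint Hilbert--Schmidt operator and $E=e^K$ is positive definite. The matrix power series for $e^K$ gives $E=\left(\begin{matrix}\ch & \shb \\ \sh & \chb\end{matrix}\right)$, which makes the real-structure condition manifest. Writing $J=\left(\begin{matrix}I & 0 \\ 0 & -I\end{matrix}\right)$, one checks $JKJ=-K$ at the Lie-algebra level (equivalent to $K$ being off-diagonal), and exponentiating gives $Je^KJ=e^{-K}=E^{-1}$, which combined with $E^*=E$ yields the $U(n,n)$ identity $E^*JE=J$. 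The symplectic condition reduces similarly to $K^T=-\Omega K\Omega^{-1}$, which is just the requirement $k^T=k$. Finally, $\|I-E\|_{L^2}<\infty$ follows from the convergent series \eqref{hypebolicsine} and \eqref{hypebolicosine} together with closure of Hilbert--Schmidt kernels under composition with a fixed Hilbert--Schmidt kernel.

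Injectivity follows from the spectral theorem: two bounded self-adjoint operators with the same positive exponential must coincide. For surjectivity, given $E$ as in the statement, positive definiteness together with compactness of $I-E$ (implied by the Hilbert--Schmidt hypothesis) forces the spectrum of $E$ to be a discrete set of positive eigenvalues accumulating only at $1$, so $E^{-1}$ is bounded and $K:=\log E$ is well-defined via functional calculus. I would then reverse each Lie-algebra computation from the forward direction: the $U(n,n)$ identity combined with $E^*=E$ gives $JEJ=E^{-1}$, and applying $\log$ forces $JKJ=-K$, so $K$ is off-diagonal; the symplectic condition then yields symmetry of the off-diagonal block, and the real-structure property supplies the complex-conjugation relation between the two blocks, producing the form \eqref{K2}.

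The main obstacle will be the quantitative step of showing $k\in L^2$ once $K=\log E$ has been defined abstractly. Here I would use that the eigenvalues $\lambda_j$ of $E$ satisfy $\sum_j|\lambda_j-1|^2<\infty$ and are bounded above and below away from $0$, so the elementary bound $|\log\lambda_j|\le C|\lambda_j-1|$ holds uniformly. This transfers the Hilbert--Schmidt property from $I-E$ to $K$, and hence to its off-diagonal block $k$.
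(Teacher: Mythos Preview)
The paper does not actually prove this proposition here; it is only recalled from \cite{GMM2}, so there is no in-paper argument to compare against. Your proposal is the natural one and is essentially correct. The forward direction is a direct check, injectivity is immediate from the spectral theorem, and for surjectivity the key observation---that $I-E$ Hilbert--Schmidt (hence compact) together with $E$ self-adjoint and strictly positive forces the spectrum of $E$ into some interval $[c,C]$ with $0<c\le C<\infty$---is exactly what makes $K=\log E$ well-defined and bounded. The three group constraints then pull back to the Lie-algebra level by functional calculus (for the antilinear real structure $\sigma$ one uses that the exponential series has real coefficients), yielding the form \eqref{K2}. Your eigenvalue bound $|\log\lambda|\le C\,|\lambda-1|$ on $[c,C]$ is the right way to transfer the Hilbert--Schmidt hypothesis from $I-E$ to $K$ and hence to $k$. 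One minor remark: since any two of the three properties in Remark~\ref{E} imply the third, once you have extracted ``$K$ is off-diagonal'' from the $U(n,n)$ relation and the conjugate-block structure from commutation with $\sigma$, self-adjointness of $K$ already forces $k^T=k$; the symplectic condition is then automatic and need not be invoked separately.
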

For our purposes, a better coordinate system is provided by $\zeta= \th=\chb^{-1} \circ \sh$.
\begin{proposition} There is a bijection between $k \in L^2$, symmetric, and $\zeta \in L^2$, symmetric, $\|\zeta\|_{op} <1$ ($op$ stands for the operator norm) such that
\begin{align}
e^K:=&\left(
\begin{matrix}
\ch &\shb\\
\sh & \chb
\end{matrix}
\right) \notag\\
=E_{\zeta}:=&
\left(
\begin{matrix}
I &\overline \zeta\\
0 & I
\end{matrix}
\right) \label{ez}
\left(
\begin{matrix}
(I - \overline {\zeta} \circ \zeta)^{1/2}&0\\
0 & (I -  \zeta \circ \overline{\zeta})^{-1/2}
\end{matrix}
\right)
\left(
\begin{matrix}
I &0\\
\zeta & I
\end{matrix}
\right)\\
=&\left(
\begin{matrix}
(I- \overline {\zeta} \circ \zeta)^{-1/2} & \overline{\zeta} \circ (I-  {\zeta} \circ \overline\zeta)^{-1/2}\\
\zeta \circ (I- \overline {\zeta} \circ \zeta)^{-1/2} & (I-  {\zeta} \circ \overline \zeta)^{-1/2}
\end{matrix}
\right) \notag
\end{align}
where the square root is taken in the operator sense.
\end{proposition}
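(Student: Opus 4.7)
The plan is to construct the bijection explicitly. For the forward direction $k \mapsto \zeta := \chb^{-1} \circ \sh$, I would first extract from Remark \ref{E} (applied to $e^K$) the key identities
\begin{align*}
\ch^2 - \shb \circ \sh &= \chb^2 - \sh \circ \shb = I,\\
\chb \circ \sh &= \sh \circ \ch,\quad \ch \circ \shb = \shb \circ \chb,
\end{align*}
using that $\ch$ is self-adjoint (so $\ch^T = \chb$), $\sh$ is symmetric as a kernel (so $\sh^T = \sh$, $\sh^* = \shb$), and $\chb$ is self-adjoint. Because $K^2$ is block diagonal, $\chb = \cosh(\sqrt{k \circ \overline k})$ is a bounded positive self-adjoint operator satisfying $I \leq \chb \leq \cosh(\|k\|_{L^2}) \cdot I$, so $\chb^{-1}$ exists and is bounded, making $\zeta \in L^2$ a well-defined kernel. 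Symmetry follows from $\zeta^T = \sh \circ \ch^{-1} = \chb^{-1} \circ \sh = \zeta$ via the intertwining relation, while a short calculation using $\sh \circ \ch^{-1} = \chb^{-1} \circ \sh$ and $\sh \circ \shb = \chb^2 - I$ produces the central identity $I - \zeta \circ \overline \zeta = \chb^{-2}$, whence $\|\zeta\|_{op}^2 = \|I - \chb^{-2}\|_{op} < 1$.

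Verifying the factorization \eqref{ez} is then a direct multiplication of the three block factors. The essential tool is the functional-calculus intertwining $\overline\zeta \circ f(\zeta \circ \overline\zeta) = f(\overline\zeta \circ \zeta) \circ \overline\zeta$, immediate from the monomial case $\overline\zeta \circ (\zeta \circ \overline\zeta)^n = (\overline\zeta \circ \zeta)^n \circ \overline\zeta$. The $(2,2)$ and $(2,1)$ entries come out to $(I-\zeta\circ\overline\zeta)^{-1/2} = \chb$ and $\chb \circ \zeta = \sh$ immediately; the $(1,2)$ entry $\overline\zeta \circ \chb$ equals $\shb$ via $\ch \circ \shb = \shb \circ \chb$ combined with $\overline\zeta = \ch^{-1} \circ \shb$; and the $(1,1)$ entry $(I-\overline\zeta\zeta)^{1/2} + \overline\zeta(I-\zeta\overline\zeta)^{-1/2}\zeta$ telescopes, after one application of the intertwining identity, to $(I - \overline\zeta\zeta)^{-1/2} = \ch$.

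For the inverse direction I would show that the matrix $E_\zeta$ defined by \eqref{ez}, for an arbitrary symmetric $\zeta \in L^2$ with $\|\zeta\|_{op}<1$, satisfies the hypotheses of Proposition \ref{exp}: positivity is immediate from the $U^* D U$ form, with $D$ a positive self-adjoint diagonal (both $(I-\overline\zeta\zeta)^{1/2}$ and $(I-\zeta\overline\zeta)^{-1/2}$ are positive self-adjoint) and $U$ invertible; the real-structure symmetry is visible from the conjugation symmetry of the entries; the symplectic property reduces to checking each of the three block factors is individually symplectic (a direct computation using symmetry of $\zeta$); and the $U(n,n)$ property then follows from the first two by Remark \ref{E}. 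The Hilbert--Schmidt bound $\|I - E_\zeta\|_{L^2} < \infty$ comes from the operator power series $(I - X)^{\pm 1/2} - I = O(X)$ applied to the Hilbert--Schmidt operators $X = \overline\zeta\zeta$ and $X = \zeta\overline\zeta$. Proposition \ref{exp} then yields a unique $K$ of the form \eqref{K2}, and reading off its $(2,1)$ and $(2,2)$ blocks recovers $\chb^{-1} \circ \sh = \zeta$, so the two constructions are mutual inverses. The main technical obstacle is organizing the intertwining identities correctly; once they are in hand, positivity, the three structural properties, and the collapse of the $(1,1)$ entry all become routine manipulations, and the bijection follows by invoking Proposition \ref{exp}.
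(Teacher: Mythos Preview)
Your proposal is correct and follows essentially the same route as the paper's own proof: define $\zeta=\chb^{-1}\circ\sh$, use the identity $I-\overline\zeta\circ\zeta=\ch^{-2}$ to get $\|\zeta\|_{op}<1$, verify the factorization \eqref{ez} algebraically via the intertwining $\overline\zeta\circ f(\zeta\circ\overline\zeta)=f(\overline\zeta\circ\zeta)\circ\overline\zeta$, and for the converse check that $E_\zeta$ satisfies the hypotheses of Proposition~\ref{exp}. You simply supply more of the verifications (invertibility of $\chb$, the $U^*DU$ argument for positivity, the factor-by-factor symplectic check, the Hilbert--Schmidt estimate) that the paper leaves implicit.
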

\begin{proof} Given $k$, define $\zeta= \chb^{-1} \circ \sh$. The decomposition \eqref{ez} is an algebraic identity, and it is clear that $\zeta $ is symmetric and $L^2$.
 Since
  $I - \ch^{-2} =  \overline {\zeta} \circ \zeta$, we see that  $\|\zeta\|_{op}<1$.
  In fact, $\|\zeta  v\|^2_{L^2} = \|v\|^2_{L^2}-\|\ch^{-1} v\|^2_{L^2}$.
Conversely, given $\zeta$ a
 symmetric Hilbert-Schmidt kernel with
$\|\zeta\|_{op} <1$ define $E_{\zeta}$ by \eqref{ez}. It is easy to check that
$E_{\zeta}$ is positive definite, satisfies the symmetries  of remark \eqref{E} and $\|I - E_{\zeta}\|_{HS} < \infty$.
($HS$ stands for the Hilbert-Schmidt norm), thus we can apply Proposition \eqref{exp} and find the corresponding $K$.
\end{proof}
We also record the following consequence:
\begin{proposition} \label{kepsilon} Let $\zeta_{\epsilon}= \zeta + \epsilon h$ ($h \in L^2$, symmetric, $\|\zeta_{\epsilon}\|_{op}<1$), and $K_{\epsilon}$ corresponding to $\zeta_{\epsilon}$ according to the previous proposition. Then
\begin{align*}
\frac{d}{d \epsilon}\Big|_{\epsilon =0}e^{K_{\epsilon}}e^{-K}=
\left(
\begin{matrix}
ia  &\overline{b}\\
b & -i a^T
\end{matrix}
\right)
\end{align*}
\end{proposition}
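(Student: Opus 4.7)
The plan is to recognize that $F(\epsilon) := e^{K_\epsilon} e^{-K}$ is a smooth curve in the group $Sp_c(\mathbb R)$ passing through the identity at $\epsilon = 0$, so $F'(0)$ must lie in the Lie algebra $sp_c(\mathbb R)$. By \eqref{spc}, elements of this algebra are exactly those of the form claimed in the proposition, so the statement follows once differentiability and membership in the group are in place.

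First I would verify differentiability of $F$ at $0$. The preceding proposition supplies the closed form $e^{K_\epsilon} = E_{\zeta_\epsilon}$ via \eqref{ez}, and each block of $E_{\zeta_\epsilon}$ is assembled from $\zeta_\epsilon$, $\overline{\zeta_\epsilon}$, and the operator square roots $(I - \overline{\zeta_\epsilon} \circ \zeta_\epsilon)^{\pm 1/2}$ and $(I - \zeta_\epsilon \circ \overline{\zeta_\epsilon})^{\pm 1/2}$. Since $\|\zeta_\epsilon\|_{op} < 1$ by hypothesis, for $\epsilon$ in a neighborhood of $0$ the spectra of $\overline{\zeta_\epsilon} \circ \zeta_\epsilon$ and $\zeta_\epsilon \circ \overline{\zeta_\epsilon}$ stay in a fixed compact subset of $[0,1)$, and the Riesz functional calculus applied on a contour encircling this set delivers smooth dependence of the square roots on $\epsilon$. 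Hence $F$ is smooth near $0$.

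Next I would verify that $F(\epsilon) \in Sp_c(\mathbb R)$. Since $K, K_\epsilon \in sp_c(\mathbb R)$, both $e^{K_\epsilon}$ and $e^{-K}$ satisfy the three symmetries of Remark \ref{E} ($\sigma$-commutation, the $U(n,n)$ relation, and the symplectic relation), and these properties are preserved under products and inverses. Writing $F'(0) = \left(\begin{matrix} A & B \\ C & D \end{matrix}\right)$ and differentiating each symmetry at $\epsilon = 0$ (where $F(0) = I$) yields three linear constraints on the blocks: $\sigma$-commutation forces $C = \overline B$ and $D = \overline A$; the $U(n,n)$ identity $F^* J_1 F = J_1$ with $J_1 = \mathrm{diag}(I,-I)$ forces $A^* = -A$, $D^* = -D$, and $C = B^*$. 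The first two conditions already imply $\overline B = B^*$, i.e.\ $B$ is symmetric (the symplectic condition then being redundant, consistent with Remark \ref{E}). Setting $A = ia$ with $a$ self-adjoint and $B = \overline b$ with $b$ symmetric gives $C = \overline B = b$ and $D = \overline A = -ia^T$, which is the form stated.

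The only real technical point is the smooth dependence of the operator square roots on $\epsilon$, which I expect to handle by a contour integral representation uniform in $\epsilon$; beyond that the argument is just the general principle that the velocity at the identity of a $C^1$ curve in a Lie group lies in the corresponding Lie algebra. In particular no explicit formula for $a$ or $b$ in terms of $h$, $\zeta$, $\overline\zeta$ is required for the proposition as stated.
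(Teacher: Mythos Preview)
Your argument is correct for the proposition as literally stated, and it takes a genuinely different route from the paper. The paper simply multiplies out $\bigl(\frac{d}{d\epsilon}\big|_{\epsilon=0} e^{K_\epsilon}\bigr)\, e^{-K}$ block by block, reads off the lower-left entry $b = \sh' \circ \ch - \chb' \circ \sh$, and then checks (using $\sh=\chb\circ\zeta$, hence $\zeta\circ\ch=\sh$) that this equals $\chb \circ \zeta' \circ \ch = \chb \circ h \circ \ch$. That explicit formula for $b$, recorded immediately after the proposition in the paper, is in fact the real payoff: it is what gets plugged into the proof of the next proposition to identify the second Fock component of $\psi\vac$ as $-\frac{1}{\sqrt2}\,\chb\circ h\circ\ch$. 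Your Lie-theoretic argument is a cleaner way to see that the derivative lands in $sp_c(\mathbb R)$, and it actually treats the diagonal blocks more carefully than the paper does (the paper leaves the $ia$, $-ia^T$ structure implicit). The trade-off is that, by design, your approach does not produce $b = \chb \circ h \circ \ch$; if you continue to the next proposition you will still need that short direct computation.
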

with
\begin{align*}
 b
=\chb \circ h \circ \ch
\end{align*}
\begin{proof}
We compute
\begin{align*}
&\frac{d}{d \epsilon}\Big|_{\epsilon =0}e^{K_{\epsilon}}e^{-K}\\
&=
\left(
\begin{matrix}
\ch' \circ \ch - \shb'\circ \sh & -\ch'\circ\shb + \shb' \circ \chb\\
\sh' \circ \ch- \chb' \circ \sh & -\sh' \circ \shb + \chb'\circ \chb
\end{matrix}
\right)
\end{align*}
An easy calculation shows that $ b=-\chb'\circ\sh + \sh' \circ \ch
=\chb \circ \z' \circ \ch$.
\end{proof}
We are ready to prove
 \begin{align*}
 \frac{\delta X_0}{\delta \overline \zeta}=
 \frac{1}{\sqrt 2 } \chb\circ X_2 \circ \ch
 \end{align*}
\begin{proposition} \label{vark}
Let $k_{\epsilon}$ correspond to $\zeta + \epsilon h$ as in the previous proposition. Then
\begin{align*}
 \frac{d}{d \epsilon}\Big|_{\epsilon =0}X_0(\phi , k_{  \epsilon } )=
 \sqrt 2 \Re \int \chb\circ X_2 \circ \ch (t, z, w) \overline h (t, z, w) dz dw
 \end{align*}
 In particular, if the above vanishes for all $h$, then $X_2=0$.
 \end{proposition}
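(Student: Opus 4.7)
The plan is to mirror the proof of Proposition~\ref{phivar}, now differentiating with respect to $\zeta$ (which parametrizes $k$) in place of $\phi$. Writing
\begin{equation*}
X_0(\phi, k_\epsilon) = \langle e^{\B_\epsilon} e^{\sqrt N \A} \H_t e^{-\sqrt N \A} e^{-\B_\epsilon}\vac,\, \vac\rangle
\end{equation*}
(using $\frac{1}{i}\partial_t \vac = 0$) and noting that only $e^{\pm \B_\epsilon}$ depends on $\epsilon$, the product rule yields $\frac{d}{d\epsilon}\big|_{\epsilon=0} X_0 = \langle [T, M]\vac, \vac\rangle$, with
\begin{equation*}
T := \bigg(\frac{d}{d\epsilon}\bigg|_{\epsilon=0}e^{\B_\epsilon}\bigg) e^{-\B}, \qquad M := e^{\B}e^{\sqrt N \A} \H_t e^{-\sqrt N \A} e^{-\B}.
\end{equation*}
Here $T$ is skew-Hermitian as the derivative at $\epsilon=0$ of the unitary curve $e^{\B_\epsilon}e^{-\B}$, while $M$ is self-adjoint since $\H_t = i\partial_t + \H$ is a sum of self-adjoint operators conjugated by unitaries. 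Using $M\vac = \H_{red}\vac$ together with $M^* = M$, the commutator collapses to
\begin{equation*}
\langle [T, M]\vac, \vac\rangle = -\langle \H_{red}\vac, T\vac\rangle - \langle T\vac, \H_{red}\vac\rangle = -2\,\Re\langle \H_{red}\vac, T\vac\rangle.
\end{equation*}

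Next, since the Lie algebra isomorphism \eqref{Lieisomorph} is the differential of the Segal--Shale--Weil group representation, Proposition~\ref{kepsilon} gives
\begin{equation*}
T = \I\left(\begin{matrix} ia & \overline b \\ b & -ia^T\end{matrix}\right), \qquad b = \chb \circ h \circ \ch.
\end{equation*}
From the explicit formula \eqref{liemap}, $T$ contains the $a^*a^*$-piece $-\frac{1}{2}\int b(x,y) a_x^* a_y^*\, dx\, dy$, while the $aa^*$-piece, after normal-ordering by $[a_x,a_y^*]=\delta(x-y)$, contributes only a scalar multiple of the identity, which commutes with $M$ and therefore does not affect the commutator. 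Applying the $a^*a^*$-piece to $\vac$ and using the $\frac{1}{\sqrt 2}$ symmetrization factor carried by $a_x^*a_y^*\vac$ in the 2-particle sector, the 2-particle component of $T\vac$ equals $-\frac{1}{\sqrt 2}b(x,y)$. Consequently
\begin{equation*}
-2\,\Re\langle \H_{red}\vac, T\vac\rangle = \sqrt{2}\,\Re\int X_2(x,y)\,\overline{b(x,y)}\, dx\, dy.
\end{equation*}

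Finally, writing $\overline b = \ch \circ \overline h \circ \chb$ and using the symmetries $\ch^T = \chb$, $\chb^T = \ch$ (consequences of $k = k^T$) together with $X_2(x,y) = X_2(y,x)$, a rearrangement of the compositions produces
\begin{equation*}
\int X_2(x,y)\,\overline{b(x,y)}\,dx\, dy = \int (\chb \circ X_2 \circ \ch)(z,w)\,\overline{h(z,w)}\,dz\, dw,
\end{equation*}
which yields the claimed identity. The main obstacle is identifying $T$ cleanly via the Segal--Shale--Weil representation and handling the normal-ordering scalar correction from \eqref{liemap}; once these are in place, the argument is a direct adaptation of the proof of Proposition~\ref{phivar}.
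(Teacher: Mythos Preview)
Your argument is essentially the paper's: differentiate $X_0$ to obtain $-2\Re\langle \H_{red}\vac, T\vac\rangle$ with $T=\I\big(\tfrac{d}{d\epsilon}e^{K_\epsilon}e^{-K}\big)$, identify the matrix via Proposition~\ref{kepsilon}, and read off the $2$-particle component of $T\vac$. One small inaccuracy: the diagonal part of $\I(L)$ does \emph{not} reduce to a pure scalar after normal-ordering --- there is also a genuine $a^\ast a$ term --- and your appeal to ``commutes with $M$'' is misplaced since you have already passed from the commutator to $T\vac$. The paper's way to dispose of this piece is cleaner: the $a^\ast a$ term annihilates $\vac$, leaving $T\vac=(i\theta,0,-\tfrac{1}{\sqrt 2}\,\chb\circ h\circ\ch,0,\dots)$ with $\theta\in\R$ (trace of the self-adjoint block), and then $-2\Re\big(\overline{i\theta}\,X_0\big)=0$ since $X_0$ is real. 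With that correction your proof matches the paper's.
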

\begin{proof}
Let $B_{\epsilon}= B(k_{\epsilon} )$.
\begin{align*}
X_0(\phi , k_{ \epsilon })=\bigg<e^{\B_{\epsilon}}e^{\sqrt N \A} H_t e^{-\sqrt N \A}e^{-\B_{\epsilon}} \vac, \vac \bigg>
\end{align*}
and
\begin{align}
\frac{d}{d \epsilon}\Big|_{\epsilon =0}X_0(\phi , k_{\epsilon })=
-2 \Re
\bigg<H_{red} \vac, \psi\vac \bigg> \label{real}
\end{align}
where
\begin{align*}
\psi=\frac{d}{d \epsilon}\Big|_{\epsilon =0}e^{\B_{\epsilon}} e^{-B}=
\I\left(\frac{d}{d \epsilon}\Big|_{\epsilon =0}e^{K_{\epsilon}}e^{-K}\right)
\end{align*}
Using the isomorphism \eqref{liemap} and proposition \eqref{kepsilon} we see that
\begin{align*}
\psi \vac  = (i \theta, 0, -\frac{1}{\sqrt 2}\chb \circ h \circ \ch (t, x_1, x_2), 0, \cdots)
\end{align*}
where $\theta$ is a real number coming from the trace of the self-adjoint $a$.
Since $X_0$ is real, $i \theta$ does not  contribute to \eqref{real}, and the result follows.
\end{proof}

\section{Explicit form of the Lagrangian}
The goal of this section is the following proposition.
\begin{proposition} \label{propL}The zeroth order term in $\H_{red} \vac $ (which provides the Lagrangian density for our coupled equations)
is $X_0(t)$ where

\begin{align*}
& - X_0(t)
=
N \int
  dx_1\left\{-\Im \left(\phi_1 \overline{\partial_t \phi_1} \right)+\big\vert\nabla\phi_1\big\vert^{2}
\right\}
\\
&+\frac{N}{2}\int  dx_1dx_2 v^{N}_{1-2}\vert\phi_1 \phi_2 + \frac{1}{N} (\shh \circ \chh)_{1, 2}\vert^{2}\\
&+\frac{1}{2}\int  dx_1dx_2 dx_3v^{N}_{1-2}\vert \phi_1 \shh_{2, 3} + \phi_2 \shh_{1, 3} \vert^2\\
+&\frac{1}{2} \Bigg(\int  dx_1 dx_2\left\{- \Im\left(\shh_{1, 2} \overline{\partial_t \shh_{1, 2}} \right) + \big \vert \nabla_{1, 2} \shh_{1, 2}\big \vert^2 \right \}\\
+&\frac{1}{2N} \int dt dx_1 dx_2
 v^N_{1-2}\Big\{
|(\shh \circ \overline \shh)_{1, 2}|^2 + (\shh \circ \overline \shh)_{1, 1}(\overline \shh \circ  \shh)_{2, 2}
\Big\}
\Bigg) \ .
\end{align*}
where $\shh_{1, 2}$ is an abbreviation for $\sh(t, x_1, x_2)$, etc, and the products are pointwise products, while compositions
are denoted by $\circ$.
\end{proposition}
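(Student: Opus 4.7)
The plan is to evaluate $X_0 = \langle \H_{red}\vac,\vac\rangle$ by going through the decomposition \eqref{hred} term by term.

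First I would discard the pieces whose scalar component vanishes for structural reasons. The Hartree bilinear $\H_G$ is of normal-ordered form $a^*a$ and annihilates the vacuum; the linear piece $e^{\B}\P_1 e^{-\B}$ remains linear in $(a,a^*)$ under Bogoliubov conjugation, so sends $\vac$ into the one-particle sector; and the cubic piece $e^{\B}\P_3 e^{-\B}$ remains cubic and of odd parity, so its $\vac$-expectation is zero. The $c$-number $N\P_0$ contributes directly and produces exactly the first line of the stated formula together with the mean-field quartic $\tfrac{N}{2}\int v_N|\phi_1\phi_2|^2$, which will serve as the leading piece of the perfect square.

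The first nontrivial contribution comes from $\I(R)$. Using the formula \eqref{liemap} and commuting $a_xa^*_y = a^*_ya_x + \delta(x-y)$, one sees that, for any $L$ with diagonal block $d_L$,
\begin{equation*}
\langle \I(L)\vac,\vac\rangle = -\tfrac{1}{2}\int d_L(x,x)\,dx.
\end{equation*}
Applied to \eqref{rformula}, one computes the trace of the top-left block
\begin{equation*}
R_{11} = -\overline{\W(\chb)}\circ\ch + \overline{\S(\sh)}\circ\sh + (e^{K}Me^{-K})_{11},
\end{equation*}
using the algebraic identities $\ch\circ\ch - \shb\circ\sh = I$ and $\ch\circ\shb = \shb\circ\chb$ coming from $e^{K}e^{-K}=I$, together with the definitions of $\S$ and $\W$. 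The resulting trace produces the $\shh$-kinetic and phase terms $-\Im(\shh\,\overline{\partial_t\shh}) + |\nabla\shh|^2$, the mixed contribution $\tfrac{1}{2}\int v_N|\phi_1\shh_{2,3}+\phi_2\shh_{1,3}|^2$ (carried by the Hartree operator $g_N$ inside $\S$ and $\W$), and the cross term $\Re(\phi(x_1)\phi(x_2)\overline{(\shh\circ\chh)(x_1,x_2)})$ (carried by $e^{K}Me^{-K}$).

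The remaining term $-N^{-1}\langle e^{\B}\V e^{-\B}\vac,\vac\rangle = -N^{-1}\langle \V\Omega,\Omega\rangle$ with $\Omega = e^{-\B}\vac$ is evaluated by Wick's theorem on the squeezed Gaussian state $\Omega$. The nonzero two-point functions $\langle a_xa_y\rangle_\Omega$ and $\langle a^*_xa_y\rangle_\Omega$ are computed directly from the Bogoliubov conjugation rules and expressed in terms of $\shh\circ\chh$ and $\shh\circ\overline{\shh}$, and the four-operator Wick expansion of $\langle a^*_x a^*_y a_xa_y\rangle_\Omega$ produces the three terms $|\shh\circ\chh|^2$, $|\shh\circ\overline{\shh}|^2$, and $(\shh\circ\overline\shh)(x,x)(\overline\shh\circ\shh)(y,y)$. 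The first of these combines with the $N\P_0$ mean-field quartic and the $\I(R)$ cross term to complete the perfect square $\tfrac{N}{2}\int v_N|\phi_1\phi_2+\tfrac{1}{N}(\shh\circ\chh)_{1,2}|^2$, while the latter two give the final line of the displayed formula.

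The main obstacle is organizational rather than conceptual: both the trace of $R_{11}$ and the Wick expansion of $\V$ generate a large number of monomials in $\phi$, $\shh$, $\chh$, and their conjugates, and careful bookkeeping of indices, complex conjugation, and factors of $N$ is needed to reassemble them into the compact perfect-square and $|\phi_1\shh_{2,3}+\phi_2\shh_{1,3}|^2$ forms. A conceptually cleaner route, which avoids computing $\I(R)$ and $\V$ separately, is to rewrite $X_0 = \langle(\H - \tfrac{1}{i}\partial_t)\psi_{appr},\psi_{appr}\rangle$ with $\psi_{appr} = e^{-\sqrt{N}\A}e^{-\B}\vac$, interpret $\langle\H\psi_{appr},\psi_{appr}\rangle$ as the Hartree-Bogoliubov energy of the coherent-squeezed state (in which the perfect square appears naturally as the full condensate density), and extract the Schr\"odinger phase terms from the unitarity of $e^{-\sqrt N \A}e^{-\B}$.
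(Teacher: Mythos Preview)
Your approach is the same as the paper's: identify the three contributing pieces $N\P_0$, $\I(R)$, and $-N^{-1}e^{\B}\V e^{-\B}$, compute each explicitly, and reassemble into the displayed form. The paper organizes this as three lemmas asserted by ``explicit calculations''; your trace formula for $\langle\I(L)\vac,\vac\rangle$ and Wick-theorem treatment of the quartic flesh out precisely those calculations.
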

The proof follows from several lemmas, which can be proved by explicit calculations.
We proceed to compute $X_0$ in \eqref{Xidef}. The only terms in \eqref{hred} which contribute to $X_0$ are $N \P_0$ which is already explicit,  the zeroth order terms in  $\I(R) \vac$, as well as the zeroth order terms in $N^{-1}e^{\B}\P_4 e^{-B} \vac$.
\begin{lemma} The term $N \P_0$ is given by
\begin{align*}
N \P_{0}&=N\int dx\left\{\frac{1}{2i}\big(\phi\bar{\phi}_{t}-\bar{\phi}\phi_{t}\big)-\big\vert\nabla\phi\big\vert^{2}
\right\}
\\
&-\frac{N}{2}\int dx_1dx_2\left\{v^{N}_{1-2}\vert\phi_1 \phi_2\vert^{2}\right\}\ .
\end{align*}
\end{lemma}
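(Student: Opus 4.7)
The plan is simply to extract the contribution of $N\P_0$ to $X_0 = \langle \H_{red}\vac, \vac\rangle$ directly from the decomposition \eqref{hred} of $\H_{red}$. Inspecting the definition \eqref{phase1-I}, one sees that $\P_0(t)$ contains no creation or annihilation operators: it is a scalar-valued function built from $\phi$, $\bar\phi$, their first derivatives, and the potential $v_N$. Hence $N\P_0$ acts on Fock space by scalar multiplication, giving $(N\P_0)\vac = N\P_0(t)\vac$, so the projection of this vector onto the vacuum sector is precisely $N\P_0(t)$.

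Substituting \eqref{phase1-I} and multiplying by $N$ reproduces the stated identity verbatim; the only arithmetic to notice is that $\frac{1}{2i}(\phi\bar\phi_t - \bar\phi\phi_t) = \mathrm{Im}(\phi\bar\phi_t)$, which reconciles the signs with the $-\mathrm{Im}(\phi_1\overline{\partial_t\phi_1})$ appearing in the full formula of Proposition \ref{propL} (recall that Proposition \ref{propL} expresses $-X_0$, not $X_0$).

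There is no substantive obstacle here — this lemma is a bookkeeping step that isolates the trivial piece of $X_0$ from the two nontrivial contributions, namely the zeroth-order part of $\I(R)\vac$ and the zeroth-order part of $N^{-1}e^{\B}\P_4 e^{-\B}\vac$. Those are where the real work lies: the first requires unpacking the matrix product formula \eqref{rformula} and extracting the scalar piece (the trace-type terms coming from $(\shh\circ\overline{\shh})$ and from conjugation of $M$ by $e^K$), while the second requires computing the vacuum expectation of $e^{\B}a^*a^*aae^{-\B}$ using Wick's theorem together with the identities \eqref{hypebolicsine}--\eqref{hypebolicosine}. Both are handled in the lemmas that follow; the present lemma only needs to observe that $\P_0$ is a c-number.
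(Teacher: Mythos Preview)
Your proposal is correct and matches the paper's treatment: the paper does not give a separate proof of this lemma, grouping it with the others as following from ``explicit calculations,'' and indeed this one is immediate since $\P_0$ is already defined in \eqref{phase1-I} as a scalar, so multiplying by $N$ and rewriting $dx\,dy$ as $dx_1\,dx_2$ is all that is needed.
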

We  used abbreviations $v^{N}_{1-2}=v_N(x_1-x_2)$, $\phi_1=\phi(x_1)$, etc., and for the following two lemmas
we will denote $u_{1, 2}=\sh(t, x_1, x_2)$ and $c_{1, 2}=\ch(t, x_1, x_2)$.
\begin{lemma}
The zeroth order term in $\I(R) \vac$ is
\begin{align*}
&-\frac{1}{2} \Bigg(\int dx_1 dx_2\left\{\frac{1}{2i}\left( \bar{u}_{1, 2}\partial_t u_{1, 2}-\partial_t \bar{u}_{1, 2} u_{1, 2}\right) + \big \vert \nabla_{1, 2} u_{1, 2}\big \vert^2 \right \}\\
&+\int dx_1 dx_2 dx_3 \left\{ v^N_{1-2} |\phi_1 u_{2, 3}|^2 + |\phi_2 u_{1, 3}|^2\right \} \\
&+ 2 \Re\int  dx_1 dx_2  dx_3 \left\{ v^N_{1-2} \phi_2 u_{1, 3} \overline{ \phi_1  u_{2, 3} }\right\}\\
&+ 2 \Re \int  dx_1 dx_2  \left\{v^N_{1-2}(u \circ c)_{1, 2} \bar \phi_1 \bar \phi_2 \right\}
\Bigg)
\end{align*}
\end{lemma}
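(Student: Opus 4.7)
The plan is to identify the scalar ($\mathbb{C}$-valued) coefficient of $\vac$ in $\I(R)\vac$ by tracking which terms in \eqref{liemap} survive on the vacuum, then to expand the explicit formula \eqref{rformula} for $R$, and finally to compute each resulting trace by careful bookkeeping with the symmetries of $\sh$, $\chb$, $g$, and $m$.

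For the first step, observe that for any $L=\left(\begin{smallmatrix} d & \ell\\ k & -d^T\end{smallmatrix}\right)$, only the $d(x,y)a_x a_y^*$ summand in \eqref{liemap} contributes a scalar multiple of $\vac$, since $a_x a_y^*\vac = \delta(x-y)\vac$, while $a_y\vac=0$ kills two of the other terms and $a_x^*a_y^*\vac$ produces a two-particle state. Hence the scalar part of $\I(R)\vac$ equals $-\tfrac{1}{2}\operatorname{tr}(d_R)$, where $d_R$ is the $(1,1)$-block of the matrix $R$. Multiplying out the matrix product in \eqref{rformula}, one finds
\[
d_R = \overline{\S(\sh)}\circ\sh - \overline{\W(\chb)}\circ\ch - \ch\circ\bar m\circ\sh - \shb\circ m\circ\ch,
\]
where the last two terms come from $(e^K M e^{-K})_{11}$.

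Next I would substitute $\S(\sh)=\tfrac{1}{i}(\sh)_t + g^T\circ\sh + \sh\circ g$ and $\W(\chb) = \tfrac{1}{i}(\chb)_t + [g^T,\chb]$, and use that $g$ is self-adjoint ($\bar g = g^T$), which one checks directly from \eqref{op-g}. The commutator trace $\operatorname{tr}([\bar g^T,\ch]\circ\ch)$ vanishes by cyclicity since $\operatorname{tr}(\ch\,\bar g^T\,\ch)=\operatorname{tr}(\bar g^T\ch^2)$, so $\W$ contributes only through its time derivative. The time-derivative traces from $\S$ and $\W$ combine using the identity $\ch\circ\ch - \shb\circ\sh = I$ (whose $\partial_t$ yields $2\operatorname{tr}((\ch)_t\circ\ch) = 2\Re\operatorname{tr}((\shb)_t\circ\sh)$) to produce $\tfrac{1}{2}\Im\operatorname{tr}((\shb)_t\circ\sh)$, which equals $-\tfrac{1}{2}\int\Im(\bar u\,\partial_t u)\,dx_1 dx_2$ after using the symmetry of $\sh$. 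For the spatial part of $\overline{\S(\sh)}\circ\sh$, the identity $\operatorname{tr}(g^T A)=\operatorname{tr}(gA^T)$ reduces the sum to $2\operatorname{tr}(g\circ\shb\circ\sh)$; decomposing $g = g_{\text{kin}} + g_{\text{Hart}} + g_{\text{ex}}$, integration by parts on the kinetic piece gives $-\tfrac{1}{2}\int|\nabla_{1,2}u_{1,2}|^2$, the Hartree piece gives (after relabeling) $-\int v^N_{1-2}|\phi_1|^2|u_{2,3}|^2$, which by the $1\leftrightarrow 2$ symmetry equals $-\tfrac{1}{2}\int v^N_{1-2}(|\phi_1 u_{2,3}|^2+|\phi_2 u_{1,3}|^2)$, and the exchange piece gives $-\Re\int v^N_{1-2}\phi_2 u_{1,3}\overline{\phi_1 u_{2,3}}$.

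Finally, using the identity $(\chb\circ\sh)(x_2,x_1)=(\sh\circ\ch)(x_1,x_2)$ (which follows from $\ch^T=\chb$ and the symmetry of $\sh$), the two $M$-traces $\tfrac{1}{2}\operatorname{tr}(\ch\bar m\sh)$ and $\tfrac{1}{2}\operatorname{tr}(\shb m\ch)$ become complex conjugates of one another after the $1\leftrightarrow 2$ relabeling of the integration variables, so their sum is $-\Re\int v^N_{1-2}(u\circ c)_{1,2}\bar\phi_1\bar\phi_2$. Collecting everything reproduces the claim. The main technical obstacle is not analytical but bookkeeping: one must apply the symmetries $\sh^T=\sh$, $\chb=\ch^*$, $\bar g^T=g$, and $m(x,y)=m(y,x)$ in the correct order so that each trace of a triple composition becomes a clear pointwise integral, and so that the required real parts emerge naturally by pairing each term with its $1\leftrightarrow 2$ swap.
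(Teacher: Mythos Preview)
Your argument is correct and is precisely the ``explicit calculation'' the paper alludes to but does not spell out: identify the scalar piece of $\I(R)\vac$ as $-\tfrac{1}{2}\operatorname{tr}(d_R)$, read off $d_R$ from \eqref{rformula}, and reduce each trace using the symmetries $\sh^T=\sh$, $\ch^T=\chb$, $\bar g=g^T$, $m^T=m$, together with $\ch^2-\shb\circ\sh=I$. The only step worth a parenthetical caution is the vanishing of $\operatorname{tr}\big([g,\ch]\circ\ch\big)$: since $g$ contains $-\Delta$, the cyclicity should be justified by writing $\ch=I+p$ with $p$ Hilbert--Schmidt, so that the commutator is $[g,p]$ and the traces involved are genuinely trace-class; the paper works at the same formal level, so this is not a defect of your argument relative to theirs.
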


\begin{lemma}
The zeroth order term in $-\frac{1}{N}e^{\B} \V e^{-\B} \vac$ is
\begin{align*}
&-\frac{1}{2}\int dx_{1}dx_{2} v^N_{1-2}\Big\{
(u \circ c)_{1, 2}\overline{(u \circ c)_{1, 2}}\\
&+|(u \circ \overline u)_{1, 2}|^2 + (u \circ \bar u)_{1, 1}(\bar u \circ  u)_{2, 2}
\Big\}\ .
\end{align*}
\end{lemma}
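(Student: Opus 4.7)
The plan is to separately compute the scalar (zeroth-order) component of each piece of $\H_{red}$ in \eqref{hred}, using that $\B$ is quadratic in $(a,a^{\ast})$, so that conjugation by $e^{\pm\B}$ preserves the total degree in $(a,a^{\ast})$. Consequently $e^{\B}\P_1 e^{-\B}$ remains linear and $e^{\B}\P_3 e^{-\B}$ remains cubic, so that when applied to $\vac$ they produce components of odd particle number and contribute nothing to $X_0$. The operator $\H_G$ is of $a^{\ast}a$ type and therefore annihilates $\vac$. Only $N\P_0$, $\I(R)\vac$, and $-N^{-1}e^{\B}\V e^{-\B}\vac$ survive, giving rise to the three lemmas.

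The first lemma is immediate from \eqref{phase1-I}. For the second, \eqref{liemap} and $[a_x,a_y^{\ast}]=\delta(x-y)$ imply that the scalar component of $\I(L)\vac$ equals $-\tfrac{1}{2}\operatorname{tr}(d)$, where $d$ is the $(1,1)$-block of $L\in sp(\mathbb C)$. Applied to $R$ from \eqref{rformula}, multiplying out the two $2\times 2$ matrix products identifies the $(1,1)$-block as
\[
d_R = -\overline{\W(\chb)}\circ \ch + \overline{\S(\sh)}\circ \sh - \shb\circ m\circ \ch - \ch\circ \overline m\circ \sh.
\]
After expanding $\S$, $\W$, $g$, $m$ by their defining formulas, integrating by parts, and using the symmetry $\sh^T=\sh$ together with the self-adjointness $\ch^{\ast}=\ch$, the trace $-\tfrac{1}{2}\operatorname{tr}(d_R)$ assembles into the $|\nabla_{1,2}\shh_{1,2}|^2$ and $\Im(\shh\,\overline{\partial_t\shh})$ contributions from $\overline{\S(\sh)}\circ\sh$, the individual and cross $v^N_{1-2}|\phi_i\,\shh_{j,3}|^2$ terms from the $v_N\bar\phi(x)\phi(y)$ part of $g$, and the $\Re\{v^N_{1-2}(\shh\circ\chh)_{1,2}\bar\phi_1\bar\phi_2\}$ contribution from the $m$-traces.

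For the third lemma I would view $|\Omega\rangle := e^{-\B}\vac$ as the gaussian (squeezed) state determined by $K$ and compute
\[
\bigl\langle\vac,\,e^{\B}\V e^{-\B}\vac\bigr\rangle = \tfrac{1}{2}\int v_N(x-y)\,\bigl\langle\Omega,\,a_x^{\ast}a_y^{\ast}a_x a_y\,\Omega\bigr\rangle\,dx\,dy.
\]
By Wick's theorem for gaussian states, the four-point function decomposes as
\[
\gamma(x,x)\gamma(y,y)+|\gamma(x,y)|^2+|\alpha(x,y)|^2,
\]
where $\gamma(x,y):=\langle\Omega,a_y^{\ast}a_x\Omega\rangle=(\shh\circ\overline{\shh})(x,y)$ and $\alpha(x,y):=\langle\Omega,a_x a_y\Omega\rangle=(\shh\circ\chh)(x,y)$; these identifications follow directly from the Bogoliubov action determined by $K$. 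The three pairings reproduce exactly the three terms of the third lemma.

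To conclude, flipping the overall sign gives $-X_0$, and one observes that the $-\tfrac{N}{2}v^N_{1-2}|\phi_1\phi_2|^2$ from $N\P_0$, the cross term $\Re\{v^N_{1-2}(\shh\circ\chh)_{1,2}\bar\phi_1\bar\phi_2\}$ from $\I(R)$, and the $|(\shh\circ\chh)_{1,2}|^2$ from the $\V$-contraction combine into the perfect square $\tfrac{N}{2}v^N_{1-2}|\phi_1\phi_2+\tfrac{1}{N}(\shh\circ\chh)_{1,2}|^2$, while the mixed three-variable pieces of the second lemma form the square $|\phi_1\shh_{2,3}+\phi_2\shh_{1,3}|^2$. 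The hardest step is the trace computation in the second lemma: unpacking $\operatorname{tr}(d_R)$ through the nested compositions of $\sh$, $\ch$ with $\S$, $\W$, $g$, $m$ requires careful bookkeeping of signs, complex conjugates, and integrations by parts in order to symmetrize one-sided derivatives into the clean expressions $|\nabla_{1,2}\shh|^2$ and $\Im(\shh\,\overline{\partial_t\shh})$.
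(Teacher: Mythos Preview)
Your argument for the lemma in question is correct: recognizing $e^{-\B}\vac$ as a quasi-free (squeezed) state and evaluating $\langle a_x^{\ast}a_y^{\ast}a_xa_y\rangle$ via Wick's theorem, with the identifications $\gamma=\shh\circ\overline{\shh}$ and $\alpha=\shh\circ\chh$ coming from the Bogoliubov action of $e^{\B}$, is exactly the structured version of the ``explicit calculations'' the paper invokes without further detail. The paper provides no proof beyond that phrase, so your Wick-theorem organization is at least as complete as what appears there, and the three pairings match the three displayed terms on the nose.
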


\section{Explicit form of the equations}
In this section we derive the following theorem, thus introducing our new equations.
First, some notation.
Consider the kernels
\begin{align*}
&\omega_c(t, x, y)= \overline{\phi}(t, x)\phi(t, y)\\
&\omega_p(t, x, y)= \shb \circ \sh (t, x, y)
\end{align*}
and their trace densities
\begin{align*}
&\rho_c=|\phi|^2(t, x)\\
&\rho_p(t, x)= \sh \circ \shb (t, x, x)
\end{align*}
Here $c$ stands for condensate, and $p$ for pair. In this notation, the old operator kernel $g_N$ defined in \eqref{op-g} is
\begin{align*}
&g_{N}(t, x, y):= -\Delta_x \delta (x-y)
 +(v_N * \rho_c )(t, x) \delta(x-y)  \\
&\qquad\qquad\  +v_N(x-y) \omega_c(t, x, y)
\end{align*}
Define the new operator kernel
\begin{align}
&\tilde g_{N}(t, x, y):= -\Delta_x \delta (x-y) \notag\\
&+(v_N * \rho_c )(t, x) \delta(x-y) +v_N(x-y) \omega_c(t, x, y) \label{alphac}\\
&+\frac{1}{N}\left((v_N * \rho_p )(t, x) \delta(x-y) +v_N(x-y) \omega_p(t, x, y)\right)\label{alphap}
\end{align}
Also denote $\alpha_c=\eqref{alphac}$, $\frac{1}{N}\alpha_p=\eqref{alphap}$ and $\alpha=\alpha_c+\frac{1}{N}\alpha_p$.
Define
\begin{align*}
&\tilde{\bf S}(s):= \frac{1}{i} s_t  + \tilde g_{N} ^T \circ s + s \circ \tilde g_{N}\quad {\rm and}\quad
\tilde{\bf W}(p):= \frac{1}{i} p_t  +[\tilde g_{N}^T, p]
\end{align*}
Finally, define
 $\Theta(t, x_1, x_2)=- v_N(t, x_1, x_2)\left( \phi(t, x_1)\phi(t, x_2) + \frac{1}{2N}  \sht(t, x_1, x_2)\right)$.

\begin{theorem} \label{mainthm}
The equation $X_1=0$ is equivalent to
\begin{align*}
&\frac{1}{i}\partial_{t}\phi(t, x_1) -\Delta\phi -\int \Theta(t, x_1, x_2) \overline \phi(t, x_2) dx_2
+ \int \frac{1}{N} \alpha_p^T(t, x_1, x_2)\phi(t, x_2) dx_2 =0
\end{align*}
The equation $X_2=0$ is equivalent to either of :

1) the equation
\begin{align*}
 \tilde{\S}(\th)
=  \Theta   +   \th \circ \overline { \Theta} \circ \th
\end{align*}
2) the pair of equations (in fact, 2a) implies 2b))
\begin{align}
&2a) \, \tilde{\S}\left(\sht\right)  =\Theta \circ \cht + \chbt \circ  \Theta\\
&2b) \, \tilde{\W}\left(\chbt\right)= \Theta \circ \shbt- \sht \circ \overline { \Theta} \notag
\end{align}
\end{theorem}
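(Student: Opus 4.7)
The strategy is to derive both equations as Euler-Lagrange equations for the explicit Lagrangian density $X_0$ given in Proposition \ref{propL}. By Proposition \ref{phivar}, vanishing of $\delta X_0/\delta \overline\phi$ for all $h$ forces $\chb \circ X_1 - \sh \circ \overline{X_1}=0$; combining this with its conjugate gives $(I - \zeta \circ \overline\zeta) \circ X_1 = 0$, and hence $X_1 = 0$ since $\|\zeta\|_{op}<1$. Similarly, by Proposition \ref{vark}, vanishing of $\delta X_0/\delta \overline\zeta$ gives $\chb \circ X_2 \circ \ch = 0$, hence $X_2 = 0$. The task therefore reduces to computing these two Wirtinger variations directly from the formula in Proposition \ref{propL} and matching them to the stated equations.

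For the $\phi$ equation, I would compute $\delta (-X_0)/\delta \overline{\phi}(t, z)$ term by term. The time and kinetic contributions combine to $\tfrac{N}{i}\partial_t \phi(z) - N\Delta \phi(z)$. For the coupled Hartree term, use the operator identity $\shh \circ \chh = \tfrac{1}{2}\sht$ (from $\sinh(2k) = 2\sinh(k)\cosh(k)$, together with the fact that $\shh$ and $\chh$ commute as functions of the same operator) to rewrite $\phi_1 \phi_2 + \tfrac{1}{N}(\shh \circ \chh)_{1,2} = \phi_1\phi_2 + \tfrac{1}{2N}\sht_{1,2} = -v_N(x_1-x_2)^{-1}\Theta(t, x_1, x_2)$; its variation then produces $-N\int \Theta(t, z, x_2)\overline{\phi}(t, x_2)\, dx_2$. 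Variation of the cross term $\tfrac{1}{2}\int v^N_{1-2}|\phi_1 \shh_{2,3} + \phi_2 \shh_{1,3}|^2$ yields two pieces: a diagonal piece $\phi(z)(v_N * \rho_p)(z)$ (using $\int |\shh(x_2,x_3)|^2\, dx_3 = \rho_p(x_2)$) and an off-diagonal piece $\int v_N(z-x_2)(\shh \circ \overline{\shh})(z, x_2)\phi(x_2)\, dx_2 = \int v_N(z-x_2)\omega_p(x_2, z)\phi(x_2)\, dx_2$ (using symmetry of $\shh$). These assemble into $(\alpha_p^T \phi)(z)$. Dividing through by $N$ produces exactly the stated $\phi$ equation.

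For the $\zeta$ equation, Proposition \ref{kepsilon} gives $\delta \sh = b \circ \ch - ia^T \circ \sh$ under $\zeta \to \zeta + \epsilon h$, with $b = \chb \circ h \circ \ch$, and parallel formulae for $\delta \ch$, $\delta \shb$, $\delta \chb$. Since $X_0$ is real and the $ia$-pieces are self-adjoint, their variational contribution is purely imaginary and is killed by the $\Re$ in Proposition \ref{vark}, exactly as in the proof of that proposition; only the $b$-pieces matter. Substituting into each $\shh$-dependent piece of $-X_0$: the pair time and kinetic terms supply the $\tfrac{1}{i}\partial_t \sht - (\Delta_{x_1}+\Delta_{x_2})\sht$ skeleton of $\tilde{\S}(\sht)$; the coupled Hartree piece produces the driving term $\Theta \circ \cht + \chbt \circ \Theta$, again using $\shh \circ \chh = \tfrac{1}{2}\sht$; the cross term $\tfrac{1}{2}\int v^N_{1-2}|\phi_1 \shh_{2,3} + \phi_2 \shh_{1,3}|^2$ supplies the $\omega_c$ and $\rho_c$ pieces of $\tilde g_N^T \circ \sht + \sht \circ \tilde g_N$ beyond the Laplacian; and the quartic pair self-interaction supplies the $\alpha_p$ corrections which upgrade $g_N$ to $\tilde g_N$. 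Reading off the coefficient of $\overline h$ yields $\tilde{\S}(\sht) = \Theta \circ \cht + \chbt \circ \Theta$, which is form 2a).

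The equivalence of forms 1) and 2), and the implication 2a) $\Rightarrow$ 2b), are algebraic and parallel the derivation in \cite{GM}: using the identities $\cht = \ch^2 + \shb \circ \sh$, $\sht = 2 \sh \circ \ch$, and the symplectic relation $\ch^2 - \shb \circ \sh = I$, one changes variables between $(\sht, \chbt)$ and $\th = \chb^{-1} \circ \sh$; the implication 2a) $\Rightarrow$ 2b) follows by differentiating the conservation law $\chbt \circ \chbt - \sht \circ \overline{\sht} = I$ in $t$ and substituting 2a) and its conjugate. The main technical obstacle is the careful bookkeeping in the $\zeta$-variation: verifying that the self-adjoint $ia$-pieces from Proposition \ref{kepsilon} genuinely drop out (this is where reality of $X_0$ is essential), and tracking compositions versus pointwise products in order to combine the various quadratic-in-$\shh$ pieces into a single $\tilde g_N$ acting on $\sht$ from both sides. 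Beyond these subtleties the calculation is mechanical if lengthy.
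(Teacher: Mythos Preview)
Your proposal takes a different route from the paper. The paper does \emph{not} vary the explicit Lagrangian $X_0$ of Proposition~\ref{propL}; instead it computes $X_1$ and $X_2$ directly as the first and second Fock components of $\H_{red}\vac$, using the structural decomposition \eqref{hred} together with \eqref{rformula} and the explicit forms of $\P_1,\P_3,\P_4$. This yields closed expressions $X_1=-\sqrt N(\chb\circ\widetilde{Har}_k(\phi)+\sh\circ\overline{\widetilde{Har}_k(\phi)})$ and a lengthy formula for $X_2$. The paper then composes $X_2$ on the left by $\chb^{-1}$ and on the right by $\ch^{-1}$, recognizes the $\frac{1}{N}$ remainder as $\z\circ\alpha_p+\alpha_p\circ\z$ (which upgrades $\S$ to $\tilde\S$), and obtains form~1) first. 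Forms~2a) and~2b) are then derived from~1) using the operator-calculus identities for $\tilde\S$ and $\tilde\W$ acting on $\z(1-\zb\z)^{-1}$ and $(1-\z\zb)^{-1}$.

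Your variational approach is legitimate and the paper even acknowledges it as a check for $X_1$ (``can also be easily verified directly from Proposition~\ref{propL}''). What your route buys is a cleaner conceptual picture (genuine Euler--Lagrange equations), at the cost of chain-rule bookkeeping through $\sh'=b\circ\ch-ia^T\circ\sh$, $\ch'=\overline b\circ\sh+ia\circ\ch$. What the paper's route buys is that $X_1,X_2$ fall out of an already-assembled $\H_{red}$ without re-touching the Lagrangian. One caution: by Proposition~\ref{vark} the coefficient of $\overline h$ in your $\zeta$-variation is $\chb\circ X_2\circ\ch$, and in the paper's organization it is $\chb^{-1}\circ X_2\circ\ch^{-1}$ that equals the form-1) expression. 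So your expectation that form~2a) drops out \emph{directly} from the variation is optimistic; more likely you will land on $\chb^2\circ(\text{form 1})\circ\ch^2$, after which the passage to 2a), 2b) proceeds exactly as in the paper via the $\tilde\S,\tilde\W$ composition identities. The sketch of 2a) $\Rightarrow$ 2b) via differentiating $\chbt\circ\chbt-\sht\circ\overline{\sht}=I$ is fine, though note the paper derives both 2a) and 2b) from 1) rather than one from the other.
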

\begin{remark}
One can go back and fourth between $\z$ and $\cht, \sht$ using
\begin{align*}
&\shb \circ \sh = (1- \zb \circ \z)^{-1} -1 =\frac{1}{2}\left(\cht -1\right)\\
&\z=\sht(1+\cht)^{-1}
\end{align*}
\end{remark}
\begin{proof}
A direct calculation for $X_1$  shows that
\begin{align*}
X_1 = -\sqrt N \left(\chb \circ \widetilde{Har}_k(\phi) + \sh \circ \overline{\widetilde{Har}_k(\phi)}\right)
\end{align*}
where
\begin{align*}
&\widetilde{Har}_k(\phi)(t, x_1)\\
&=\frac{1}{i}\partial_{t}\phi -\Delta\phi -\int \Theta(t, x_1, x_2) \overline \phi(t, x_2) dx_2\\
&+ \frac{1}{N} \int v_{N}(x_{1}-x_{2})
(\shh\circ\shhb)(x_{1},x_{2})\phi(x_{2}) dx_2\\
&+\frac{1}{N}\phi(x_{1})  \int v_{N}(x_{1}-x_{2})(\shh\circ\shhb)(x_{2},x_{2}) dx_2
\end{align*}
In conjunction with Proposition \eqref{phivar} this shows that
\begin{align*}
\frac{\delta \L}{\delta \overline \phi} = N \widetilde{Har}_k(\phi)
\end{align*}
which can also be easily verified directly from Proposition \eqref{propL}.

A direct calculation also shows that, if $X_2$ denotes
the second component of $\H_{red} \vac$, then
\begin{align}
&-\sqrt 2X_2 (t, y_1, y_2) =\label{X2explicit}\\
&\bigg( \left(\S(\sh) - \chb \circ m \right) \circ \ch - \left(\W(\chb) + \sh \circ \overline m \right) \circ \sh\bigg) \notag\\
+&(1/N)\int dx_{1}dx_{2}\qquad\Big\{  \notag\\
&\bigg(\chhb(y_{1},x_{2})\shh(x_{2},y_{2})\big(\shhb\circ\shh\big)(x_{1},x_{1})v_{N}(x_{1}-x_{2}) + \notag
\\
&\chhb(y_{1},x_{2})\shh(x_{1},y_{2})\big(\shhb\circ\shh\big)(x_{1},x_{2})v_{N}(x_{1}-x_{2})  + \notag
\\
&\chhb(y_{1},x_{1})\shh(x_{2},y_{2})\big(\shh\circ\shhb\big)(x_{1},x_{2})v_{N}(x_{1}-x_{2})
 + \notag
\\
&\chhb(y_{1},x_{1})\shh(x_{1},y_{2})\big(\shh\circ\shhb\big)(x_{2},x_{2})v_{N}(x_{1}-x_{2})\bigg)_{symm} + \notag
\\
&\shh(y_{1},x_{1})\shh(x_{2},y_{2})\big(\shhb\circ\chhb\big)(x_{1},x_{2})v_{N}(x_{1}-x_{2}) + \notag
\\
&\chhb(y_{1},x_{1})\chh(x_{2},y_{2})\big(\chhb\circ\shh\big)(x_{1},x_{2})v_{N}(x_{1}-x_{2}) \notag
\Big\}\ .
\end{align}
where $symm$ stands for "symmetrized". The time dependance in the last six lines has been omitted.
Recalling $\zeta = \chb^{-1} \circ \sh = \sh \circ \chb^{-1}$, compose on the left with $\chb^{-1}$ and on the right with $\ch^{-1}$
to get
\begin{align}
&\chb^{-1}\circ X_2 \circ \ch^{-1}
= \S(\zeta) -   \Theta   -   \z \circ \overline{ \Theta}  \circ \z + \frac{1}{N} \N
\end{align}
where  $N$ is given by
\begin{align*}
\N(t, y_1, y_2)=& \z(t, y_{1},y_{2})
\bigg(\int dx
\bigg(\big(\shhb\circ\shh + \shh\circ\shhb\big)(t, x, x)v_{N}(x -y_1)\bigg)_{symm} +
\\
&\bigg(\int dx
\z(t, x, y_{2})\big(\shhb\circ\shh + \shh\circ\shhb\big)(t, x, y_1)v_{N}(x -y_1) \bigg)_{symm}
\end{align*}
where $symm$ stands for symmetrizing in $y_1, y_2$. In other words,
\begin{align*}
\N= \z \circ \alpha_p + \alpha_p \circ \z
\end{align*}
Thus, in $\z$ coordinates, the equation $X_2=0$ becomes
\begin{align}
\tilde{\S}(\zeta) -\Theta   -   \z \circ \overline {\Theta} \circ \z =0\label{thetaeq}
\end{align}
Now we can get an equation for $\tilde{\W}(\chbt)$ and $\tilde{\S}(\sht)$.
We will use the general formulas
\begin{align*}
&\tilde{\W}(f^{-1})=-f^{-1}\circ \tilde{\W}(f) \circ f^{-1}\\
& \tilde{\W}(f \circ \overline g)=\tilde{\S}(f) \circ \overline g- f \circ \overline{\tilde{\S}(g)}\\
&\tilde{\S}(f \circ g)=\tilde{\S}(f) \circ g - f \circ \overline{\tilde{\W} \left(\overline g \right)}
\end{align*}
Thus
\begin{align*}
&\tilde{\W}\left((1- \z \circ \zb)^{-1}\right)=
(1- \z \circ \zb)^{-1} \circ \left(\tilde{\S}(\z) \circ \zb -\z \circ \overline{\tilde{\S}(\z)}\right)\circ(1- \z \circ \zb)^{-1}\\
&=(1- \z \circ \zb)^{-1}\circ \Bigg(\left(  \Theta   +  \z \circ \overline{ \Theta}  \circ \z \right) \zb
+\z \circ \overline {\left(  \Theta   +  \z \circ \overline{ \Theta}  \circ \z \right)}
\Bigg)(1- \z \circ \zb)^{-1}\\
\end{align*}
Similarly we get a formula for $\tilde{\S}(\sht)$, using
\begin{align*}
&\tilde{\S}\left(\z \circ (1- \zb \circ \z)^{-1}\right)\\
&=
(1- \z \circ \zb)^{-1}\circ \left( \tilde{\S}(\z) - \z \circ \overline{\tilde{\S}(\z)} \circ \z \right) \circ (1- \zb \circ \z)^{-1}
\end{align*}
\begin{align*}
&\tilde{\S}\left(\z \circ (1- \zb \circ \z)^{-1}\right)=\\
&(1- \z \circ \zb)^{-1}\circ \Bigg( \Theta   +   \z \circ \overline {\Theta} \circ \z
+\z \circ \overline{\left(
 \Theta   +   \z \circ \overline {\Theta} \circ \z \right)}
 \circ \z \Bigg) \circ (1- \zb \circ \z)^{-1}\\
&= \left((1- \z \circ \zb)^{-1} - \frac{1}{2}\right) \circ  \Theta  + \Theta  \circ \left((1- \zb \circ \z)^{-1} - \frac{1}{2}\right)
\end{align*}
\end{proof}

\section{Conserved quantities}
We start by motivating the introduction of some conserved quantities. Recall the Lagrangian
\begin{align*}
&\L (\phi, \sh)
=
N \int
 dt dx_1\left\{-\Im \left(\phi_1 \overline{\partial_t \phi_1} \right)+\big\vert\nabla\phi_1\big\vert^{2}
\right\}
\\
&+\frac{N}{2}\int dt dx_1dx_2 v^{N}_{1-2}\vert\phi_1 \phi_2 + \frac{1}{N} (\shh \circ \chh)_{1, 2}\vert^{2}\\
&+\frac{1}{2}\int dt dx_1dx_2 dx_3v^{N}_{1-2}\vert \phi_1 \shh_{2, 3} + \phi_2 \shh_{1, 3} \vert^2\\
+&\frac{1}{2} \Bigg(\int dt dx_1 dx_2\left\{- \Im\left(\shh_{1, 2} \overline{\partial_t \shh_{1, 2}} \right) + \big \vert \nabla_{1, 2} \shh_{1, 2}\big \vert^2 \right \}\\
+&\frac{1}{2N} \int dt dx_1 dx_2
 v^N_{1-2}\Big\{
|(\shh \circ \overline \shh)_{1, 2}|^2 + (\shh \circ \overline \shh)_{1, 1}(\overline \shh \circ  \shh)_{2, 2}
\Big\}
\Bigg) \ .
\end{align*}
where $\shh_{1, 2}$ is an abbreviation for $\sh(t, x_1, x_2)$, etc, and the products are pointwise products, while compositions
are denoted by $\circ$.
 Introduce the energy $\E$
\begin{align*}
&\E (\phi, \sh)(t)
=
N \int
  dx_1\left\{\big\vert\nabla\phi_1\big\vert^{2}
\right\}
\\
&+\frac{N}{2}\int dx_1dx_2 v^{N}_{1-2}\vert\phi_1 \phi_2 + \frac{1}{N} (\shh \circ \chh)_{1, 2}\vert^{2}\\
&+\frac{1}{2}\int  dx_1dx_2 dx_3v^{N}_{1-2}\vert \phi_1 \shh_{2, 3} + \phi_2 \shh_{1, 3} \vert^2\\
+&\frac{1}{2} \Bigg(\int  dx_1 dx_2\left\{ \big \vert \nabla_{1, 2} \shh_{1, 2}\big \vert^2 \right \}\\
+&\frac{1}{2N} \int  dx_1 dx_2
 v^N_{1-2}\Big\{
|(\shh \circ \overline \shh)_{1, 2}|^2 + (\shh \circ \overline \shh)_{1, 1}(\overline \shh \circ  \shh)_{2, 2}
\Big\}
\Bigg) \ .
\end{align*}
Our equations for $\phi$ and $\sh$ are equivalent to
\begin{align}
&N \frac{1}{i} \frac{\partial \phi}{\partial t} = - \frac{\delta \E}{\delta \overline \phi} \label{varphi}\\
&\frac{1}{i} \frac{\partial \sh}{\partial t} = - \frac{\delta \E}{\delta \overline \sh} \label{vark}
\end{align}
The relation
\begin{align*}
 0=&\frac{d}{d \theta}\big|_{\theta=0} \E(e^{i \theta} \phi, e^{2 i \theta} \sh)\\
 =&2 \Re \left(\int \frac{\delta \E}{\delta \overline \phi}(- i \overline \phi)dx_1+ \int\frac{\delta \E}{\delta \overline \sh} (
 -i \overline \sh)dx_1 dx_2 \right)
 \end{align*}
together with
\eqref{varphi}, \eqref{vark}, leads to the conservation
\begin{align*}
 \frac{d}{dt}\left(\int|\phi(t, x_1)|^2 dx_1 + \frac{1}{N} \int |\sh(t, x_1, x_2)|^2 dx_1 dx_2\right) =0
 \end{align*}
 thus we define the density
 \begin{align*}
 \rho(t, x_1) &= |\phi(t, x_1)|^2  + \frac{1}{N} \int |\sh(t, x_1, x_2)|^2  dx_2\\
 &=\rho_c(t, x_1) + \frac{1}{N}\rho_p(t, x_1)
 \end{align*}

 Similarly, let  $\phi_{\epsilon}(t, x)=
 \phi(t, x+ \epsilon e_j)$, $\sh_{\epsilon}(t, x, y)=
 \sh(t, x+ \epsilon e_j, y+ \epsilon e_j)$ ($e_j=$ unit vector, $1 \le j \le 3$). The relation
 \begin{align*}
 0=&\frac{d}{d \epsilon}\Big|_{\epsilon=0} \E(\phi_{\epsilon}, \sh_{\epsilon})\\
 =&2 \Re \left(\int \frac{\delta \E}{\delta \overline \phi} \partial_j \overline \phi dx_1+ \int\frac{\delta \E}{\delta \overline \sh} (
 \partial_j \overline \sh)dx_1 dx_2 \right)
 \end{align*}
 together with
\eqref{varphi}, \eqref{vark}
 leads to the conservation
\begin{align*}
\frac{d}{dt} \left( N \int \Im \left( \phi \overline{\partial_j \phi} \right) dx_1 + \int \Im \left( \sh \overline {\partial_j\sh}\right) dx_1 dx_2\right)=0
\end{align*}
thus we define the momentum density
\begin{align*}
 p_{j}(t, x_1)=& \Im \left( \phi \overline{\partial_j \phi} \right)+ \frac{1}{N} \int \Im \left( \sh \overline {\partial_j\sh } \right) dx_2\\
 :=& p_{c, j}(t, x_1) + \frac{1}{N}p_{p, j}(t, x_1)
 \end{align*}
 Finally, using \eqref{varphi}, \eqref{vark} we see that
 \begin{align*}
 \frac{\partial}{\partial t} \E(t) =0
 \end{align*}
 so we define the energy density
 \begin{align*}
&e(t, x_1)= N
\big\vert\nabla\phi_1\big\vert^{2}
\\
&+\frac{N}{2}\int dx_2 v^{N}_{1-2}\vert\phi_1 \phi_2 + \frac{1}{N} (\shh \circ \chh)_{1, 2}\vert^{2}\\
&+\frac{1}{2}\int  dx_2 dx_3v^{N}_{1-2}\vert \phi_1 \shh_{2, 3} + \phi_2 \shh_{1, 3} \vert^2\\
+&\frac{1}{2} \Bigg(\int  dx_2\left\{ \big \vert \nabla_{1, 2} \shh_{1, 2}\big \vert^2 \right \}\\
+&\frac{1}{2N} \int   dx_2
 v^N_{1-2}\Big\{
|(\shh \circ \overline \shh)_{1, 2}|^2 + (\shh \circ \overline \shh)_{1, 1}(\overline \shh \circ  \shh)_{2, 2}
\Big\}
\Bigg) \ .
\end{align*}

\section{A conjecture}
We conjecture  that, if $\phi$, $k$ satisfy the equations of Theorem \eqref{mainthm} and
$\big\vert\psi_{exact}\big>$, $\big\vert\psi_{appr}\big>$, are defined by \eqref{exact}, \eqref{approx}, then,
in the critical case $\beta=1$,
\begin{align*}
\|\big\vert\psi_{exact}\big>-\big\vert\psi_{appr}\big>\|_{\F} \to 0
\end{align*}
as $N \to \infty$, at an explicit rate.

\end{document}